\newcommand{\apxn}{ap\-prox\-i\-ma\-tion}
\newcommand{\capact}{\ensuremath{Q}}
\newcommand{\bestsol}[1]{\textbf{#1}}
\newcommand{\secsol}[1]{\underline{#1}}
\newtheorem{theorem}{Theorem}
\theoremstyle{definition}
\newtheorem{lemma}[theorem]{Lemma}
\newtheorem{corollary}[theorem]{Corollary}
\newtheorem{definition}[theorem]{Definition}
\newtheorem{observation}[theorem]{Observation}
\newtheorem{proposition}[theorem]{Proposition}
\newtheorem{bfproblem}[theorem]{Problem}
\theoremstyle{remark}
\newtheorem{remark}[theorem]{Remark}
\newtheorem*{bfproblem*}{Problem}
\numberwithin{theorem}{section}
\numberwithin{table}{section}
\numberwithin{figure}{section}
\newcommand{\admsplit}{feasible splitting}
\newcommand{\reprarc}{pivot}
\newcommand{\jia}{q(i)}
\newcommand{\jib}{p(i)}
\crefname{subsection}{Section}{Sections}
\crefname{algorithm}{Algorithm}{Algorithms}
\crefname{observation}{Observation}{Observations}
\crefname{figure}{Figure}{Figures}
\crefname{line}{line}{lines}
\crefname{equation}{Equation}{Equations}
\crefname{enumi}{Condition}{Conditions}
\DeclareMathOperator{\balance}{balance}
\DeclareMathOperator{\indeg}{indeg}
\DeclareMathOperator{\outdeg}{outdeg}
\newcommand{\anf}{A}
\newcommand{\nde}{Z}
\newcommand{\Nz}{\mathbb N\cup\{0\}}
\newcommand{\Nzinf}{\mathbb N\cup\{0,\infty\}}
\newcommand{\dist}{\text{dist}}
\newcommand{\cost}{c} %
\newcommand{\comps}{C} %
\newcommand{\longMCF}{{Uncapacitated Minimum-Cost Flow}}
\newcommand{\MCF}{{UMCF}}
\newcommand{\W}{\ensuremath{W}} %
\newcommand{\Wpair}{\W^2}
\newcommand{\pt}{pol\-y\-no\-mi\-al-time}
\newcommand{\prob}[5]{%
  \begin{bfproblem}[#1]\leavevmode
  \par\noindent\hangindent=\parindent\textit{#2}  #3
  \par\noindent\hangindent=\parindent\textit{#4}  #5    
  \end{bfproblem}
}
\newcommand{\optprob}[3]{\prob{#1}{Instance:}{#2}{Task:}{#3}}
\newcommand{\TATSP}{$\triangle$-ATSP}
\newcommand{\longCARP}{Capacitated Arc Routing}
\newcommand{\CARP}{CARP}
\newcommand{\DRPP}{{DRPP}}
\newcommand{\MWCARP}{{MWCARP}}
\newcommand{\MWRPP}{{MWRPP}}
\author{René van Bevern%
\thanks{Supported by
the Russian Foundation for Basic Research (RFBR), project~16-31-60007 mol\textunderscore{}a\textunderscore{}dk, and by the Ministry of Education and Science of the Russian Federation.}
\\
Novosibirsk State University, Novosibirsk, Russian Federation, \texttt{rvb@nsu.ru}\\
Sobolev Institute of Mathematics,
Siberian Branch of the Russian Academy of Sciences,
Novosibirsk, Russian Federation
\and
Christian Komusiewicz%
\thanks{Supported by the German Research Foundation (DFG), project MAGZ (KO~3669/4-1).}
\\
Institut für Informatik, Friedrich-Schiller-Universität Jena, Germany, \texttt{christian.komusiewicz@uni-jena.de}
\and
Manuel Sorge%
\thanks{Supported by the German Research Foundation (DFG), project DAPA (NI~369/12).}
\\
Institut für Softwaretechnik und Theoretische Informatik, TU
Berlin, Germany, \texttt{manuel.sorge@tu-berlin.de}
}
\title{\LARGE A parameterized approximation algorithm for the
mixed and windy Capacitated Arc Routing Problem:
theory and experiments%
\thanks{A preliminary version of this article
    appeared in the Proceedings of the 15th Workshop on
    Algorithmic Approaches for Transportation Modeling,
    Optimization, and Systems (ATMOS'15) \citep{BKS15}.
    This version describes several algorithmic enhancements,
    contains an experimental evaluation of our algorithm,
    and provides a new benchmark data set.}}
\date{}
\begin{document}
\maketitle
\begin{bfseries}
  \begin{sffamily}
    \begin{small}
\boldmath
\noindent
We prove that any polynomial-time $\alpha(n)$-\apxn{} algorithm
for the $n$-vertex metric asymmetric Traveling Salesperson Problem
yields a polynomial-time $O(\alpha(\comps))$-\apxn{} algorithm
for the mixed and windy Capacitated Arc Routing Problem,
where $\comps$~is the number of weakly connected components
in the subgraph induced by the positive-demand arcs---%
a~small number in many applications.
In conjunction with known results, we obtain
constant-factor \apxn{}s for~$\comps\in O(\log n)$ 
and $O\bigl({\log \comps}/{\log\log \comps}\bigr)$-\apxn{}s
in general.
Experiments show that our algorithm,
together with several heuristic enhancements,
outperforms many previous polynomial-time heuristics.
Finally, since the solution quality achievable in polynomial time
appears to mainly depend on~\(C\)
and since \(C=1\) in almost all benchmark instances,
we propose the \texttt{Ob} benchmark set,
simulating cities that are divided into several components by a river.\par
\end{small}
\end{sffamily}
\end{bfseries}

\paragraph{Keywords:}
vehicle routing;
transportation;
Rural Postman;
Chinese Postman;
NP-hard problem;
fixed-parameter algorithm;
combinatorial optimization

\section{Introduction}
\citet{GW81} introduced the \longCARP{} Problem (\CARP{})
in order to model the search for minimum-cost routes
for vehicles of equal capacity
that are initially located in a vehicle depot
and have to serve all ``customer'' demands.
Applications of \CARP{} include
snow plowing, waste collection, meter reading,
and newspaper delivery \citep{CL14}.
Herein, the customer demands require
that roads of a road network are served.
The road network is modeled as a graph
whose edges represent roads
and whose vertices can be thought of as road intersections.
The customer demands are modeled as positive integers
assigned to edges.
Moreover, each edge has a cost for traveling along it.
\optprob{\longCARP{} {Problem} (\CARP{})}%
{An undirected graph~$G=(V,E)$, a \emph{depot} vertex~$v_0\in V$, travel costs~$\cost\colon E\to \Nz$, edge demands~$d\colon E\to\Nz$, and a vehicle capacity~$\capact{}$.}
{Find a set~$\W{}$ of closed walks in~$G$, each corresponding to the route of one vehicle and passing through the depot vertex~$v_0$, and find a serving function~$s\colon \W{}\to 2^E$ determining for each closed walk~$w\in\W{}$ the subset~$s(w)$ of edges \emph{served} by~$w$ such that
  \begin{compactitem}
  \item $\sum_{w\in\W{}}\cost(w)$ is minimized, where $\cost(w):=\sum_{i=1}^\ell \cost(e_i)$ for a walk~$w=(e_1,e_2,\dots,e_\ell)\in E^\ell$,
  \item $\sum_{e\in s(w)}d(e)\leq \capact{}$, and
  \item each edge~$e$ with $d(e)>0$ is served by exactly one walk in~$\W{}$.
  \end{compactitem}
}
\noindent
Note that vehicle routes may traverse
each vertex or edge of the input graph
multiple times.
Well-known special cases of \CARP{}
are the NP-hard Rural Postman Problem (RPP) \citep{LR76},
where the vehicle capacity is unbounded and, hence,
the goal is to find a shortest possible route
for one vehicle that visits all positive-demand edges,
and the \pt{} solvable Chinese Postman Problem (CPP)
\citep{Edm65,EJ73},
where additionally \emph{all} edges have positive demand.

\subsection{Mixed and windy variants}
\CARP{} is polynomial-time constant-factor approximable
\citep{BHNS14,Jan93,Woe08}.
However, as noted by \citet[Challenge~5]{BNSW14} in a recent survey
on the computational complexity
of arc routing problems,
the polynomial-time approximability of \CARP{}
in directed, mixed, and windy graphs is open.
Herein, a \emph{mixed} graph may contain directed arcs
in addition to undirected edges
for the purpose of modeling one-way roads or
the requirement of servicing a road
in a \emph{specific} direction or
in \emph{both} directions.
In a \emph{windy} graph,
the cost for traversing an undirected edge~$\{u,v\}$
in the direction from~$u$ to~$v$
may be different from the cost
for traversing it in the opposite direction
(this models sloped roads, for example).
In this work, we study \apxn{} algorithms
for mixed and windy variants of \CARP{}.
To formally state these problems, we need some terminology
related to mixed graphs.

\begin{definition}[Walks in mixed and windy graphs]\label[definition]{def:walks}
  A \emph{mixed graph} is a triple~$G=(V,E,A)$, where $V$~is a set of \emph{vertices}, $E\subseteq\{\{u,v\}\mid u,v\in V\}$ is a set of \emph{(undirected) edges}, $A\subseteq V\times V$~is a set of \emph{(directed) arcs} (that might contain loops), and no pair of vertices has an arc \emph{and} an edge between them.  The \emph{head} of an arc~$(u,v)\in V\times V$ is~$v$, its \emph{tail} is~$u$.

  A \emph{walk in~$G$} is a sequence~$w=(a_1,a_2,\dots,a_\ell)$
  such that,
  for each~$a_i=(u,v)$, $1\leq i\leq\ell$,
  we have $(u,v)\in A$ or~$\{u,v\}\in E$,
  and such that the tail of~$a_i$ is the head of~$a_{i-1}$
  for $1<i\leq\ell$.
  If $(u, v)$~occurs in~$w$, then we say that
  $w$~\emph{traverses} the arc~$(u,v)\in A$
  or the edge~$\{u,v\}\in E$.
  If the tail of~$a_1$ is the head of~$a_\ell$,
  then we call~$w$ a \emph{closed walk}.

  Denoting by~$\cost\colon V\times V\to\Nzinf$
  the \emph{travel cost} between vertices of~$G$,
  the \emph{cost of a walk~$w=(a_1,\dots,a_\ell)$} is
  $\cost(w):=\sum_{i=1}^\ell \cost(a_i)$.
  The \emph{cost of a set~$\W{}$ of walks} is
  $\cost(\W{}):=\sum_{w\in\W{}}\cost(w)$.
\end{definition}

\noindent
We study \apxn{} algorithms for the following problem.

\optprob{Mixed and windy CARP (\MWCARP{})}%
{A mixed graph~$G=(V,E,A)$, a depot vertex~$v_0\in V$, travel costs~$\cost\colon V\times V\to\Nzinf$,  demands~$d\colon E\cup A\to\Nz$, and a vehicle capacity~$\capact{}$.}%
{Find a minimum-cost set~$\W{}$ of closed walks in~$G$,
  each passing through the depot vertex~$v_0$,
  and a serving function~$s\colon \W{}\to 2^{E\cup A}$ determining
  for each walk~$w\in\W{}$ the subset~$s(w)$ of the edges and arcs it \emph{serves} such that
  \begin{compactitem}
  \item $\sum_{e\in s(w)}d(e)\leq \capact{}$, and
  \item each edge or arc~$e$ with $d(e)>0$ is served
    by exactly one walk in~$\W{}$.
  \end{compactitem}
}
\noindent
For brevity, we use the term ``arc'' to refer
to both undirected edges and directed arcs.
Besides studying the approximability of \MWCARP{},
we also consider the following special cases.

If the vehicle capacity~\(Q\) in MWCARP is unlimited
(that is, larger than the sum of all demands)
and the depot~\(v_0\) is incident to a positive-demand arc, 
then one obtains the
mixed and windy Rural Postman Problem (MWRPP):

\optprob{Mixed and windy RPP (\MWRPP)}%
{A mixed graph~$G=(V,E,A)$
  with travel costs~$\cost\colon V\times V\to\Nzinf$
  and a set~$R\subseteq E\cup A$ of \emph{required arcs}.}
{Find a minimum-cost closed walk in~$G$ traversing all arcs in~$R$.}
\noindent
If, furthermore, $E=\emptyset$ in MWRPP,
then we obtain the directed Rural Postman Problem (\DRPP) and
if $R=E\cup A$, then we obtain the mixed Chinese Postman Problem (MCPP).

\subsection{An obstacle: approximating metric asymmetric TSP}
\label{sec:reltotsp}
Aiming for good approximate solutions for \MWCARP{},
we have to be aware of the strong relation
of its special case~\DRPP{}
to the following variant of
the Traveling Salesperson Problem (TSP):

\pagebreak[3]
\optprob{Metric asymmetric TSP (\TATSP{})}%
{A set $V$~of vertices and travel costs~$\cost \colon V\times V\to\Nz$ satisfying the triangle inequality $\cost(u,v)\leq \cost(u,w)+\cost(w,v)$ for all $u,v,w\in V$.}%
{Find a minimum-cost cycle that visits every vertex in~$V$ exactly once.}

\noindent
Already \citet{CCCM86} observed that
DRPP is a generalization of \TATSP{}.
In fact, DRPP is at least as hard to approximate as \TATSP{}:
Given a \TATSP{} instance,
one obtains an equivalent \DRPP{} instance
by simply adding a zero-cost loop to each vertex and
by adding these loops to the set~$R$ of required arcs.
This leads to the following observation.

\begin{observation}\label[observation]{obs:conn-to-tatsp}
  Any $\alpha(n)$-\apxn{} for \(n\)-vertex \DRPP{}
  yields an $\alpha(n)$-\apxn{} for \(n\)-vertex \TATSP{}.
\end{observation}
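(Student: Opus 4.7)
The plan is to make the sketch in the paragraph preceding the observation rigorous by checking both directions of the reduction and verifying that the vertex count, costs, and triangle inequality interact correctly.

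First, I would formalise the reduction. Given a \TATSP{} instance on vertex set~$V$ with cost function $\cost\colon V\times V\to\Nz$, I construct a \DRPP{} instance on the mixed graph $G=(V,\emptyset,A)$ with $A=\{(v,v)\mid v\in V\}$, extending the cost function by setting $\cost(v,v)=0$ for each $v\in V$ and letting $R:=A$. The resulting \DRPP{} instance has exactly $n=|V|$ vertices, matching the assumption of the observation. Any $\alpha(n)$-approximation algorithm for \DRPP{} can then be run on this constructed instance in polynomial time.

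Next, I would use the output of the \DRPP{} algorithm to build a \TATSP{} tour of no larger cost. A feasible \DRPP{} closed walk $w$ must traverse every loop in $R$, hence it visits every vertex of $V$ at least once; since each loop has cost $0$, $\cost(w)$ equals the total cost of the non-loop steps in $w$. Shortcutting $w$ at repeated vertices yields a Hamiltonian cycle of $V$ whose cost is at most $\cost(w)$ by repeated application of the triangle inequality on $\cost$, giving me a valid \TATSP{} solution.

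Finally, I would relate optima. Any optimal \TATSP{} tour of cost $\mathrm{OPT}_{\TATSP}$ can be read as a closed walk in $G$ and augmented (at no extra cost) with the zero-cost loops at each vertex, producing a feasible \DRPP{} walk of cost $\mathrm{OPT}_{\TATSP}$; therefore $\mathrm{OPT}_{\DRPP}\le\mathrm{OPT}_{\TATSP}$. Combining, the shortcut tour has cost at most $\alpha(n)\cdot\mathrm{OPT}_{\DRPP}\le\alpha(n)\cdot\mathrm{OPT}_{\TATSP}$, as required. The only nontrivial step is the shortcutting argument, and the only subtlety is remembering that the constructed \DRPP{} instance has the same vertex count $n$ as the \TATSP{} instance, so the approximation factor $\alpha(n)$ does not change under the reduction.
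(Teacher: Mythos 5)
Your overall route is exactly the paper's (the paper only sketches it in the paragraph before the observation: add a zero-cost loop at every vertex, make the loops the required arcs, and note that shortcutting recovers a tour), but your formalisation of the construction contains a genuine flaw. You set $G=(V,\emptyset,A)$ with $A=\{(v,v)\mid v\in V\}$, i.e., the only arcs in the graph are the loops. Under the paper's definition of a walk (\cref{def:walks}), a walk may only traverse elements of $E\cup A$; the travel costs $\cost\colon V\times V\to\Nz$ do \emph{not} by themselves license steps between vertices. In your graph, consecutive arcs of a walk must satisfy the head/tail condition, and a loop at $u$ has head $u$ while a loop at $v\neq u$ has tail $v$, so for $n\geq 2$ no closed walk can traverse two loops at distinct vertices. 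Your constructed \DRPP{} instance is therefore infeasible, and the quantity you call ``the total cost of the non-loop steps in~$w$'' refers to steps that cannot exist in the graph as you defined it.

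The fix is small and is what the paper implicitly intends: take $A = (V\times V)$, i.e., the complete directed graph on~$V$ with arc costs $\cost(u,v)$ for $u\neq v$, together with the zero-cost loops, and set $R$ to be the loops \emph{only} (not all of~$A$, or else the instance would force traversal of every arc). With that repaired construction, the remainder of your argument is correct and complete: a feasible closed walk must visit every vertex because it traverses every required loop; shortcutting at repeated vertices uses the triangle inequality of the \TATSP{} costs (note \DRPP{} itself imposes no triangle inequality, so overriding $\cost(v,v)=0$ is harmless); an optimal \TATSP{} tour augmented with the loops witnesses $\mathrm{OPT}_{\text{\DRPP{}}}\leq\mathrm{OPT}_{\text{\TATSP{}}}$; and the vertex set is unchanged, so the factor $\alpha(n)$ carries over. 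Your shortcutting and optima-comparison steps are exactly the details the paper leaves implicit, so once the arc set is corrected your write-up is a faithful rigorous version of the paper's reduction rather than a different approach.
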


\noindent
Interestingly, the constant-factor approximability of \TATSP{}
is a long-standing open problem and the
$O(\log n/\log \log n)$-\apxn{} by \citet{AGMOS10}
from~\citeyear{AGMOS10}
is the first asymptotic improvement over
the $O(\log n)$-\apxn{} by \citet{FGM82} from~\citeyear{FGM82}.
Thus, the constant-factor approximations
for (undirected) CARP \citep{BHNS14,Jan93,Woe08}
and MCPP \citep{RV99}
cannot be simply carried over
to MWRPP or MWCARP.
\subsection{Our contributions}
As discussed in \cref{sec:reltotsp},
any $\alpha(n)$-\apxn{} for \(n\)-vertex \DRPP{}
yields an $\alpha(n)$-\apxn{} for \(n\)-vertex \TATSP{}.
We first contribute the following theorem
for the converse direction.
\begin{theorem}\label{ourthm}
  If $n$-vertex \TATSP{} is $\alpha(n)$-approximable in $t(n)$~time,
  then
  \begin{enumerate}[(i)]
  \item\label{ourthm1} $n$-vertex \DRPP{} is
    $(\alpha(\comps)+1)$-approximable
    in $t(\comps)+O(n^3\log n)$~time,
  \item\label{ourthm2} $n$-vertex \MWRPP{} is
    $(\alpha(\comps)+3)$-approximable
    in $t(\comps)+O(n^3\log n)$~time,
    and
  \item\label{ourthm3} $n$-vertex \MWCARP{} is
    $(8\alpha(C+1)+27)$-approximable in
    $t(\comps+1)+O(n^3\log n)$~time,
  \end{enumerate}
  \noindent
  where $\comps$~is the number of weakly connected components
  in the subgraph induced by the positive-demand arcs and edges.
\end{theorem}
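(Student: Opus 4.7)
My plan is to establish the three parts in sequence, with \DRPP{} as the base template and \MWRPP{}/\MWCARP{} as progressively richer extensions of the same ``contract each component'' idea.

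For part~(\ref{ourthm1}), let $G_1,\dots,G_\comps$ be the weakly connected components of the subgraph induced by the required arcs. The algorithm has two ingredients. First, for each~$i$, I would compute an optimal closed walk~$w_i$ traversing all required arcs of~$G_i$ by formulating the min-cost balancing subproblem---augment the required arcs by a cheapest multiset of paths in~$G$ so as to equalize in- and out-degree at every vertex while preserving connectivity---as \longMCF{} on the shortest-path closure of~$G$, solvable in $O(n^3\log n)$ time. Second, I would contract each~$G_i$ to a representative vertex, equip the $\comps$-vertex quotient with the shortest-path distances in~$G$ (which satisfy the triangle inequality), invoke the given $\alpha(\comps)$-\apxn{} for \TATSP{} on this instance in time~$t(\comps)$, and splice each~$w_i$ into the returned tour where it visits~$G_i$. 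For the analysis, any \DRPP{} solution itself realizes a feasible balancing within every component, so $\sum_i\cost(w_i)\leq\text{OPT}$; and shortcutting OPT between consecutive visits to distinct components yields a feasible \TATSP{} tour on the $\comps$ representatives of cost at most~$\text{OPT}$, bounding the connection portion by $\alpha(\comps)\cdot\text{OPT}$ and summing to the claimed $(\alpha(\comps)+1)$ ratio.

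Part~(\ref{ourthm2}) uses the same two-step template, but the intra-component subproblem is now the NP-hard mixed and windy Chinese postman problem. I would approximate~$w_i$ by orienting each undirected required edge in its cheaper direction and running the directed balancing procedure, losing at most a constant factor on the intra-component cost; combined with the \TATSP{} connection at ratio~$\alpha(\comps)$, careful accounting of the extra detours needed to traverse edges in their oriented direction produces the claimed $\alpha(\comps)+3$ ratio. Part~(\ref{ourthm3}) reduces to~\MWRPP{} by treating the depot as an additional (possibly singleton) component---giving the $\comps+1$ in~$\alpha(\comps+1)$---and then splits the \MWRPP{} tour~$W$ from part~(\ref{ourthm2}) into capacity-respecting subwalks at~$v_0$ by scanning~$W$, accumulating served demand, and, whenever the next required arc would exceed~$\capact{}$, closing off at~$v_0$ via a shortest path and restarting from~$v_0$ by a shortest path to the current location. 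The detour overhead is bounded by the standard lower bound $(2/\capact{})\sum_{e\in R}d(e)\dist(v_0,e)\leq\text{OPT}_{\text{MWCARP}}$; stacking the \MWRPP{} ratio on this overhead and tracking the factors of~$2$ from the mixed/windy reduction and from the splitting yields the stated $8\alpha(\comps+1)+27$ after bookkeeping.

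The technically trickiest step will be the constant-tracking in part~(\ref{ourthm3}): reconciling the direction in which each undirected required edge is traversed by the capacity-respecting walks against the direction used by the intermediate \MWRPP{} tour under windy costs requires care, as does ensuring the additive splitting overhead is charged only against the \MWCARP{} lower bound rather than double-counted against the \MWRPP{} cost. Parts~(\ref{ourthm1}) and~(\ref{ourthm2}) are essentially disciplined compositions of the min-cost flow subroutine with the \TATSP{} oracle on the component quotient, and in all three parts the $O(n^3\log n)$ running-time term is dominated by all-pairs shortest paths and min-cost flow and is thus independent of~$t$.
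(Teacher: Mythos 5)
Your plan for part~(\ref{ourthm1}) contains a fatal flaw that the paper itself explicitly warns against (in the discussion following the proof of part~(\ref{ourthm1})): you balance each connected component of~$G[R]$ \emph{separately} and justify $\sum_i\cost(w_i)\leq \cost(T^*)$ by saying that any \DRPP{} solution ``realizes a feasible balancing within every component.'' It does not. An optimal solution realizes one \emph{global} balancing, and its path decomposition may route flow from an excess vertex of one component to a deficit vertex of a \emph{different} component, so it induces no per-component balancings of comparable cost. Concretely, let component~$i$ be a single required arc $a_i=(s_i,t_i)$ of cost~$\varepsilon$, and arrange the components along a directed cycle with connector arcs $(t_i,s_{i+1})$ of cost~$\delta$ (indices modulo~$\comps$). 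An optimal tour goes once around the cycle, so $\cost(T^*)=\comps(\varepsilon+\delta)$, but the cheapest closed walk covering $a_i$ \emph{alone} must return from $t_i$ to $s_i$ the long way around, costing $\comps(\varepsilon+\delta)=\cost(T^*)$ for \emph{each}~$i$; hence $\sum_i\cost(w_i)=\comps\cdot\cost(T^*)$ and your ratio degrades to $\Theta(\comps)$, not $\alpha(\comps)+1$. The paper avoids this by computing a \emph{single} minimum-cost balancing multiset~$R^*$ for all of~$G[R]$ with one \MCF{} computation, which gives $\cost(R\uplus R^*)\leq\cost(T^*)$ by \cref{obs:eulerize} (Eulerian components of $G[R\uplus R^*]$ may merge several components of~$G[R]$, which is why \cref{alg:eulerian-rp} and \cref{lem:eulerian-rp} tolerate several $V_R$-vertices per component), and only then connects the Eulerian components by the \TATSP{} tour. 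Your \TATSP{} shortcutting bound is fine, but part~(\ref{ourthm2}) inherits the per-component flaw; note also that the paper's \cref{lem:drpp} deliberately carries \emph{two} reference walks ($\hat T$ for the Eulerian part, $\tilde T$ for the \TATSP{} part) precisely so that the factor~$3$ from edge orientation (\cref{lem:direct-edges}) multiplies only the Eulerian part, yielding $\alpha(\comps)+3$ rather than $3(\alpha(\comps)+1)$.

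Part~(\ref{ourthm3}) has a second genuine gap: the lower bound $(2/\capact)\sum_{e}d(e)\dist(v_0,e)\leq\mathrm{OPT}$ is the classical tour-splitting charge for \emph{undirected} CARP, and the standard way of using it---bounding $\dist(v_0,\mathrm{start}(w_i))$ and $\dist(\mathrm{end}(w_i),v_0)$ by walking from a high-demand served arc \emph{backward} along the base tour---breaks under windy/asymmetric costs, where reverse traversal of a tour segment can be arbitrarily more expensive than forward traversal. This is exactly why the paper states its analysis is ``fundamentally different'' from those of \citet{Jan93} and \citet{Woe08}. The missing combinatorial idea is the paper's matching argument: group the segments of the feasible splitting into consecutive pairs, each serving demand exceeding~$\capact$ (\cref{def:admsplit}), apply Hall's theorem to injectively assign each pair a \emph{distinct} optimal vehicle tour sharing a served pivot arc with it (\cref{lem:assignment}), and build auxiliary walks that follow the base tour \emph{forward only} until the next pivot arc and then follow the matched optimal tour to or from~$v_0$ (\cref{def:auxpaths}); counting shows each arc of~$\W^*$ is used at most three times and each arc of~$\W$ at most five times, giving cost at most $3\cost(T)+5\cost(\W)+3\cost(\W^*)\leq(8\beta(\comps+1)+3)\cost(\W^*)$ and hence $8\alpha(\comps+1)+27$. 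Without this, your bookkeeping cannot reach the stated constant (your ``factors of~$2$ from the mixed/windy reduction'' also misstates the loss, which is~$3$), so as written both the $\alpha(\comps)+1$ and the $8\alpha(\comps+1)+27$ bounds are unproven.
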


\noindent
The approximation factors in \cref{ourthm}\eqref{ourthm3}
and \cref{cor:fpt-apx} below are rather large. 
Yet in the experiments described in \cref{sec:experiments},
the relative error of the algorithm was always below~5/4.

We prove \cref{ourthm}(\ref{ourthm1}--\ref{ourthm2}) in \cref{sec:rpp} and \cref{ourthm}\eqref{ourthm3} in \cref{sec:carp}.
Given \cref{ourthm} and \cref{obs:conn-to-tatsp},
the solution quality achievable in polynomial time appears to 
mainly depend on the number~$\comps$. %
The number~$\comps$~is small in several applications, for example, when routing street sweepers and snow plows.
Indeed, we found \(C=1\)
in all but one instance of the benchmark sets
\texttt{mval} and \texttt{lpr} of \citet{BBLP06}
and \texttt{egl-large} of \citet{BE08}.
This makes
the following corollary particularly interesting.

\begin{corollary}\label[corollary]{cor:fpt-apx}
  \MWCARP{} is 35-approximable in $O(2^{\comps}\comps^2+n^3\log n)$~time, that is, constant-factor approximable 
  in polynomial time for $\comps\in O(\log n)$.
\end{corollary}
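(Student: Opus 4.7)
The plan is to invoke \cref{ourthm}\eqref{ourthm3} with an \emph{exact} algorithm for \TATSP{} on $C+1$ vertices, which gives $\alpha \equiv 1$ and thus an approximation factor of $8\cdot 1 + 27 = 35$ for \MWCARP{}. The only question is which exact \TATSP{} algorithm to plug in so that the running time matches $O(2^C C^2 + n^3\log n)$.

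The natural choice is the classical Held--Karp dynamic programming algorithm, which solves $n$-vertex \TATSP{} exactly in $t(n)=O(2^n n^2)$ time (the asymmetric, directed setting poses no issue for Held--Karp, as it stores subsets of visited vertices together with an endpoint). Substituting $\alpha(n)=1$ and this $t$ into \cref{ourthm}\eqref{ourthm3} yields an \MWCARP{} algorithm with approximation factor~$35$ and running time
\[
t(C+1)+O(n^3\log n) = O(2^{C+1}(C+1)^2) + O(n^3\log n),
\]
which simplifies to $O(2^C C^2 + n^3\log n)$, matching the claimed bound.

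For the second half of the corollary, I observe that if $C\in O(\log n)$, then $2^C\in n^{O(1)}$ and $C^2\in O(\log^2 n)$, so the whole running time is bounded by a polynomial in~$n$; the approximation ratio~$35$ is constant, so this is a polynomial-time constant-factor approximation in that regime.

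There is essentially no obstacle: the corollary is a direct instantiation of \cref{ourthm}\eqref{ourthm3}. The only thing worth double-checking is that Held--Karp indeed applies to the \emph{asymmetric} metric TSP (it does, since the DP uses directed transitions) and that one really does need the algorithm only on a graph with $C+1$ vertices---namely the metric closure on the components plus the depot, as used in the proof of \cref{ourthm}\eqref{ourthm3}.
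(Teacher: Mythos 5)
Your proposal is correct and matches the paper's own proof exactly: the paper also obtains \cref{cor:fpt-apx} by plugging the exact $O(2^n n^2)$-time Held--Karp/Bellman dynamic program for $n$-vertex \TATSP{} (i.e., $\alpha \equiv 1$) into \cref{ourthm}\eqref{ourthm3}, yielding the factor $8\cdot 1+27=35$ and running time $O(2^{\comps}\comps^2+n^3\log n)$. Your added checks (Held--Karp handles the asymmetric case; $C\in O(\log n)$ gives polynomial time) are exactly the right sanity checks and require no further justification.
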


\noindent
\cref{cor:fpt-apx} 
follows from \cref{ourthm} and the exact $O(2^nn^2)$-time algorithm
for $n$-vertex \TATSP{} by \citet{Bell62} and \citet{HelK62}.
It is ``tight'' in the sense that
finding polynomial-time constant-factor \apxn{}s
for \MWCARP{} in general
would, via \cref{obs:conn-to-tatsp}, answer a question
open since \citeyear{FGM82} and that
computing \emph{optimal} solutions of \MWCARP{}
is NP-hard even if~$C=1$ \citep{BNSW14}.

In \cref{sec:experiments}, 
we evaluate our algorithm on
the \texttt{mval}, \texttt{lpr}, and \texttt{egl-large} benchmark sets
and find that
it outperforms many previous polynomial-time heuristics.
Some instances are solved to optimality.
Moreover, since we found that the solution quality
achievable in polynomial time
appears to crucially depend on the parameter~\(C\)
and almost all of the above benchmark instances have~\(C=1\),
we propose a method
for generating benchmark instances
that simulate cities
separated into few components
by a river, resulting in the \texttt{Ob} benchmark set.

\subsection{Related work}
Several polynomial-time heuristics for variants of CARP
are known \citep{GW81,MA05,BBLP06,BE08} and, in particular, 
used for computing initial solutions
for more time-consuming local search
and genetic algorithms \citep{BBLP06,BE08}. 
Most heuristics are improved variants of three basic approaches:
\begin{description}
\item[Augment and merge] heuristics start out with small vehicle tours,
each serving one positive-demand arc,
then successively grow and merge these tours
while maintaining capacity constraints \citep{GW81}.
\item[Path scanning] heuristics grow vehicle tours
by successively augmenting them with
the ``most promising'' positive-demand arc \citep{GDB83},
for example,
by the arc that is closest to the previously added arc.
\item[Route first, cluster second] approaches
first construct a \emph{giant tour}
that visits all positive-demand arcs,
which can then be split \emph{optimally} into subsegments
satisfying capacity constraints \citep{Bea83,Ulu85}.
\end{description}
The giant tour for the ``route first, cluster second'' approach
can be computed heuristically \citep{BBLP06,BE08},
yet when computing it
using a constant-factor approximation for the undirected RPP,
one can split it to obtain a constant-factor approximation
for the undirected CARP \citep{Jan93,Woe08}.
Notably, the ``route first, cluster second'' approach
is the only one known to yield
solutions of guaranteed quality for CARP in polynomial time.
One barrier for generalizing this result to MWCARP is that
already approximating MWRPP is challenging (see \cref{sec:reltotsp}).
Indeed, the only polynomial-time algorithms
with guaranteed solution quality
for arc routing problems in mixed graphs
are for variants to which \cref{obs:conn-to-tatsp} does not apply
since \emph{all} arcs and edges have to be served
\citep{RV99,DLL14}.

Our algorithm follows the ``route first, cluster second'' approach:
We first compute an approximate giant tour
using \cref{ourthm}\eqref{ourthm2} and
then, analogously to the approximation algorithms
for undirected CARP \citep{Jan93,Woe08},
split it to obtain \cref{ourthm}\eqref{ourthm3}.
However, since the analyses
of the approximation factor for undirected CARP
rely on symmetric distances between vertices \citep{Jan93,Woe08},
our analysis is fundamentally different.
Our experiments show that
computing the giant tour using \cref{ourthm}\eqref{ourthm2}
is beneficial
compared to computing it heuristically
like \citet{BBLP06} and \citet{BE08}.

Notably,
the approximation factor of \cref{ourthm} depends
on the number~\(C\) of connected components
in the graph
induced by positive-demand arcs.
This number~\(C\) is small in many applications and benchmark data sets, a fact that inspired the development of exact exponential-time algorithms for RPP which are efficient when~\(C\) is small
\citep{Fre77, GWY16, SBNW11,SBNW12}.
\citet{Orl76} noticed already in \citeyear{Orl76}
that the number~\(C\) is a determining factor
for the computational complexity of RPP.
\cref{ourthm} shows that it is also a determining factor
for the solution quality achievable in polynomial time.

In terms of parameterized complexity theory \citep{DF13,CFK+15},
one can interpret \cref{cor:fpt-apx} as
a fixed-parameter constant-factor approximation algorithm \citep{Mar08}
for MWCARP parameterized by~\(C\).

\section{Preliminaries}\label{sec:preliminaries}
Although we consider problems on mixed graphs
as defined in \cref{def:walks},
in some of our proofs
we use more general mixed \emph{multigraphs}~$G=(V,E,A)$
with a set~$V =: V(G)$ of \emph{vertices},
a multiset~$E =: E(G)$ over $\{\{u,v\}\mid u,v\in V\}$
of \emph{(undirected) edges},
a multiset~$A =: A(G)$ over $V\times V$
of \emph{(directed) arcs} that may contain self-loops,
and \emph{travel costs}~$\cost\colon V\times V\to\Nzinf$.
If $E=\emptyset$, then $G$~is a \emph{directed multigraph}.

From \cref{def:walks}, recall the definition of walks in mixed graphs.
An \emph{Euler tour for~$G$} is a closed walk that traverses
each arc and each edge of~$G$
exactly as often as it is present in~$G$.
A graph is \emph{Eulerian} if it allows for an Euler tour.  Let $w=(a_1,a_2,\dots,a_\ell)$ be a walk.  The \emph{starting point of~$w$} is the tail of~$a_1$, the \emph{end point of~$w$} is the head of~$a_\ell$.  A \emph{segment} of~$w$ is a consecutive subsequence of~$w$. Two segments~$w_1=(a_i,\dots,a_j)$ and $w_2=(a_{i'},\dots,a_{j'})$ of the walk~$w$ are \emph{non-overlapping} if $j<i'$ or $j'<i$.  Note that two segments of~$w$ might be non-overlapping yet share arcs if $w$~contains an arc several times.  The \emph{distance}~$\dist_G(u,v)$ from vertex~$u$ to vertex~$v$ of~$G$ is the minimum cost of a walk in~$G$ starting in~$u$ and ending in~$v$.

The \emph{underlying undirected (multi)graph} of~$G$ is obtained by replacing all directed arcs by undirected edges.  Two vertices~$u,v$ of~$G$ are \emph{(weakly) connected} if there is a walk starting in~$u$ and ending in~$v$ in the underlying undirected graph of~$G$.  A \emph{(weakly) connected component} of~$G$ is a maximal subgraph of~$G$ in which all vertices are mutually (weakly) connected.

For a multiset~$R\subseteq V\times V$ of arcs, $G[R]$ is the directed multigraph consisting of the arcs in~$R$ and their incident vertices of~$G$.  We say that $G[R]$~is the graph \emph{induced by the arcs in~$R$}. For a walk~$w=(a_1,\dots,a_\ell)$ in~$G$, $G[w]$ is the directed multigraph consisting of the arcs~$a_1,\dots,a_\ell$ and their incident vertices, where $G[w]$~contains each arc with the multiplicity it occurs in~$w$.  Note that $G[R]$ and~$G[w]$ might contain arcs with a higher multiplicity than~$G$ and, therefore, are not necessarily sub(multi)graphs of~$G$.  Finally, the cost of a multiset~$R$ is $\cost(R):=\sum_{a\in R}\mathbb \nu(a)\cost(a)$, where $\mathbb \nu(a)$~is the multiplicity of~$a$ in~$R$.

\section{Rural Postman}
\label{sec:rpp}

This section presents our \apxn{} algorithms for \DRPP{} and \MWRPP{},
thus proving \cref{ourthm}\eqref{ourthm1} and \eqref{ourthm2}.
\cref{sec:eulerian-comps} shows an algorithm
for the special case of \DRPP{}
where the required arcs induce a subgraph
with Eulerian connected components.
\cref{sec:drpp,sec:mwrpp} subsequently generalize this algorithm
to \DRPP{} and \MWRPP{}
by adding to the set of required arcs an arc set of minimum weight
so that the required arcs induce a graph
with Eulerian connected components.

\subsection{Special case: Required arcs induce Eulerian components}
\label{sec:eulerian-comps}
To turn $\alpha(n)$-\apxn{}s for $n$-vertex \TATSP{}
into $(\alpha(\comps)+1)$-\apxn{}s for this special case of \DRPP{},
we use \cref{alg:eulerian-rp}. The two main steps of the algorithm are illustrated in~\Cref{fig:eulerian-rp}:
The algorithm first computes an Euler tour
for each connected component of the graph~\(G[R]\)
induced by the set~\(R\) of required arcs and
then connects them using an approximate \TATSP{} tour
on a vertex set~\(V_R\)
containing (at least) one vertex
of each connected component of~\(G[R]\).

\begin{figure*}[tb]
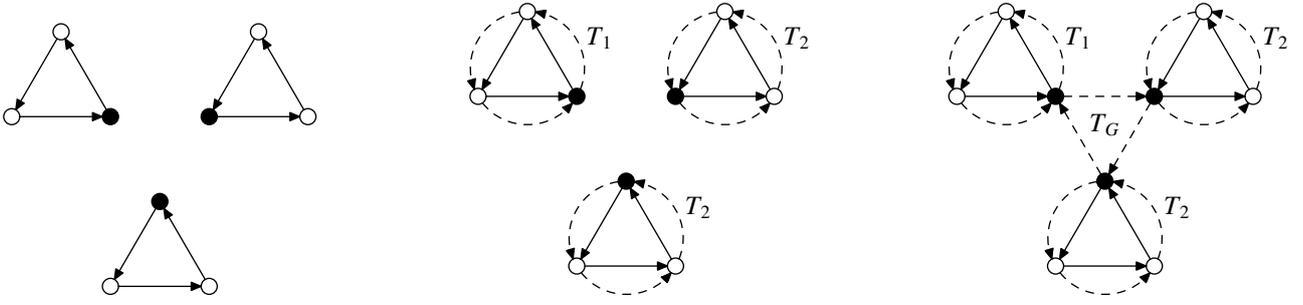

  \centering
  \begin{subfigure}[b]{0.3\textwidth}
    \centering
    \includegraphics{dar-1}
    \caption{Input: Only required arcs~$R$ are shown, vertices in~$V_R$ are black.}
  \end{subfigure}
  \hfill
  \begin{subfigure}[b]{0.3\textwidth}
    \centering
    \includegraphics{dar-2}
    \caption{Compute Euler tours~$T_i$  (dashed)
      for each connected component of~{$G[R]$}.}
  \end{subfigure}
  \hfill
  \begin{subfigure}[b]{0.3\textwidth}
    \centering
    \includegraphics{dar-3}
    \caption{Add closed walk~$T_G$ with the vertices in~\(V_R\)
      to get a feasible solution~$T$ (dashed).}
  \end{subfigure}
  \caption{Steps of \cref{alg:eulerian-rp}
    to compute feasible solutions for \DRPP{}
    when all connected components of~$G[R]$ are Eulerian.}
  \label{fig:eulerian-rp}
\end{figure*}

The following \cref{lem:eulerian-rp} gives a bound
on the cost of the solution returned by \cref{alg:eulerian-rp}.
\cref{alg:eulerian-rp} and \cref{lem:eulerian-rp}
are more general than necessary for this special case of DRPP.
In particular, we will not exploit yet
that they allow $R$~to be a \emph{multi}set and
$V_R$~to contain more than one vertex
of each connected component of~$G[R]$.
This will become relevant in \cref{sec:drpp},
when we use \cref{alg:eulerian-rp}
as a subprocedure to solve the general \DRPP{}.
\begin{algorithm*}
  \KwIn{A directed graph~$G$ with travel costs~$\cost{}$,
    a multiset~$R$ of arcs of~$G$ such that
    $G[R]$~consists of~$\comps$ Eulerian connected components,
    and a set~$V_R\subseteq V(G[R])$ containing
    at least one vertex of each connected component of~$G[R]$.}

  \KwOut{A closed walk traversing all arcs in~$R$.}
  
  \For{$i=1,\dots,\comps$}{
    $v_i\gets{}$any vertex of~$V_R$ in component~$i$ of~$G[R]$\;
    $T_i\gets{}$Euler tour of connected component~$i$ of~$G[R]$ starting and ending in~$v_i$\nllabel{a1:comptours}\;} 

  $(V_R,\cost')\gets{}$\TATSP{} instance on the vertices~$V_R$, where $\cost'(v_i,v_j):=\dist_G(v_i,v_j)$\nllabel{a1:buildtsp}\;

  $T_{V_R}\gets\alpha(|V_R|)$-approximate \TATSP{} solution for~$(V_R,\cost')$\nllabel{a1:joincomps}\;

  $T_G\gets{}$closed walk for~$G$ obtained by replacing each arc~$(v_i,v_j)$ on~$T_{V_R}$ by a shortest path from \(v_i\) to \(v_j\) in~$G$\nllabel{a1:tg}\;

  $T\gets{}$closed walk obtained by following~$T_G$ and taking a detour~$T_i$ whenever reaching a vertex~$v_i$\nllabel{a1:buildtour}\;

  \Return{$T$}\nllabel{a1:rettour}\;

  \caption{Algorithm for the proof of \cref{lem:eulerian-rp}}
  \label[algorithm]{alg:eulerian-rp}
\end{algorithm*}

\begin{lemma}\label[lemma]{lem:eulerian-rp}
  Let $G$~be a directed graph with travel costs~$\cost{}$, let $R$~be a multiset of arcs of~$G$ such that $G[R]$~consists of~$\comps$ Eulerian connected components, let $V_R\subseteq V(G[R])$ be a vertex set containing at least one vertex of each connected component of~$G[R]$, and let~$\tilde T$ be any closed walk containing the vertices~$V_R$.

  If $n$-vertex \TATSP{} is $\alpha(n)$-approximable in $t(n)$~time, then \cref{alg:eulerian-rp} applied to~$(G,c,R)$ and~$V_R$ returns a closed walk of cost at most $\cost(R)+\alpha(|V_R|)\cdot \cost(\tilde T)$ in $t(|V_R|)+O(n^3)$~time that traverses all arcs of~$R$.
\end{lemma}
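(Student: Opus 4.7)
The plan is to establish feasibility of the returned walk~$T$, to upper-bound its cost, and to bound the running time.

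For feasibility, I would observe that $T_G$ is a closed walk in~$G$ that visits every~$v_i\in V_R$, because it is obtained in \cref{a1:tg} by expanding the Hamiltonian cycle~$T_{V_R}$ into shortest paths of~$G$. Each Euler tour~$T_i$ computed in \cref{a1:comptours} is a closed walk starting and ending in~$v_i$, so inserting it at the first visit of~$v_i$ in \cref{a1:buildtour} keeps~$T$ closed; and since the $T_i$ are Euler tours of the connected components of~$G[R]$, their union traverses every arc of~$R$ with its correct multiplicity, and hence so does~$T$.

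For the cost, the natural decomposition is
\[
  \cost(T) \;=\; \cost(T_G) + \sum_{i=1}^{\comps}\cost(T_i) \;=\; \cost(T_G) + \cost(R),
\]
so it suffices to show $\cost(T_G)\le \alpha(|V_R|)\cdot \cost(\tilde T)$. By construction of~$T_G$ in \cref{a1:tg} we have $\cost(T_G)=\sum_{(v_i,v_j)\in T_{V_R}}\dist_G(v_i,v_j)=\cost'(T_{V_R})$, and since $\cost'$ is defined by shortest-path distances it satisfies the triangle inequality, so $(V_R,\cost')$ is a legal \TATSP{} instance. The approximation guarantee in \cref{a1:joincomps} then yields $\cost'(T_{V_R})\le \alpha(|V_R|)\cdot \cost'(\mathrm{OPT})$, where $\mathrm{OPT}$ is an optimal tour on $(V_R,\cost')$.

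The main obstacle is the remaining inequality $\cost'(\mathrm{OPT})\le \cost(\tilde T)$ for the arbitrary reference walk~$\tilde T$. My approach is a shortcutting argument: as we follow~$\tilde T$, keep only the first occurrence of each $V_R$-vertex, thus obtaining a cyclic ordering that induces a Hamiltonian cycle~$H$ on~$V_R$. For any two consecutive kept vertices $v_i,v_j$, the corresponding subsegment of~$\tilde T$ is a $v_i$-$v_j$ walk in~$G$ and hence has cost at least $\dist_G(v_i,v_j)=\cost'(v_i,v_j)$; since these subsegments partition~$\tilde T$, summing gives $\cost'(H)\le \cost(\tilde T)$ and therefore $\cost'(\mathrm{OPT})\le \cost(\tilde T)$, yielding the desired bound. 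Finally, the Euler tours in \cref{a1:comptours} cost $O(|R|)$ time, building $\cost'$ in \cref{a1:buildtsp} reduces to all-pairs shortest paths costing $O(n^3)$, running the \TATSP{} oracle in \cref{a1:joincomps} costs $t(|V_R|)$, and expanding $T_{V_R}$ into~$T_G$ and inserting the detours is polynomial and dominated by the shortest-path step, so the total runtime fits into $t(|V_R|)+O(n^3)$.
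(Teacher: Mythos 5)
Your proposal is correct and follows essentially the same route as the paper's proof: the same decomposition $\cost(T)=\cost(T_G)+\cost(R)$, the same identification $\cost(T_G)=\cost'(T_{V_R})$ via the shortest-path \TATSP{} instance, and the same key bound $\cost'(T^*_{V_R})\leq\cost(\tilde T)$, which the paper obtains by considering the tour visiting $V_R$ in the order of~$\tilde T$ --- exactly the shortcutting argument you spell out in more detail. Your feasibility and running-time analyses likewise match the paper's.
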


\begin{proof}
  We first show that the closed walk~$T$ returned by \cref{alg:eulerian-rp} visits all arcs in~$R$. Since the \TATSP{} solution~$T_{V_R}$ constructed in \cref{a1:joincomps} visits all vertices~$V_R$, in particular~$v_1,\dots,v_\comps$, so does the closed walk~$T_G$ constructed in \cref{a1:tg}.  Thus, for each vertex~$v_i$, $1\leq i\leq\comps$, $T$~takes Euler tour~$T_i$ through the connected component~$i$ of~$G[R]$ and, thus, visits all arcs in~$R$.
  
 We analyze the cost~$\cost(T)$.  The closed walk~$T$ is composed of the Euler tours~$T_i$ computed in \cref{a1:comptours} and the closed walk~$T_G$ computed in \cref{a1:tg}.  Hence, $\cost(T)=\cost(T_G)+\sum_{i=1}^{\comps}\cost(T_i)$.
Since each~$T_i$ is an Euler tour for some connected component~$i$ of~$G[R]$, each~$T_i$ visits each arc of component~$i$ as often as it is contained in~$R$.  Consequently, $\sum_{i=1}^{\comps}\cost(T_i)=\cost(R)$.

It remains to analyze $\cost(T_G)$.
Observe first that
the distances in the \TATSP{} instance~$(V_R,\cost')$
correspond to shortest paths in~$G$
and thus
fulfill the triangle inequality.
We have $\cost(T_G)=\cost'(T_{V_R})$ by
construction of the \TATSP{} instance~$(V_R,\cost')$ in
\cref{a1:buildtsp} and by construction of~$T_G$ from~$T_{V_R}$ in
\cref{a1:tg}. Let $\tilde T$~be any closed walk containing~$V_R$ and let
$T^*_{V_R}$ be an optimal solution for the \TATSP{}
instance~$(V_R,\cost')$.  If we consider the closed
walk~$\tilde T_{V_R}$ that visits the vertices~$V_R$ of the \TATSP{}
instance~$(V_R,\cost')$ in the same order as~$\tilde T$, we get
$\cost'(T^*_{V_R})\leq\cost'(\tilde T_{V_R})\leq \cost(\tilde T)$.
Since the closed walk~$T_{V_R}$ computed in \cref{a1:joincomps} is an
$\alpha(|V_R|)$-approximate solution to the \TATSP{}
instance~$(V_R,\cost')$, it finally follows that
$\cost(T_G)=\cost'(T_{V_R})\leq\alpha(|V_R|)\cdot
\cost'(T^*_{V_R})\leq\alpha(|V_R|)\cdot \cost(\tilde T)$.
  
Regarding the running time, observe that the instance~$(V_R,\cost')$ in \cref{a1:buildtsp} can be constructed in $O(n^3)$~time using the Floyd-Warshall all-pair shortest path algorithm \citep{Flo62}, which dominates all other steps of the algorithm except for, possibly, \cref{a1:joincomps}.
\end{proof}

\noindent
\cref{lem:eulerian-rp} proves \cref{ourthm}\eqref{ourthm1}
for \DRPP{}~instances $I=(G,c,R)$ when $G[R]$~consists of Eulerian connected components: Pick~$V_R$ to contain exactly one vertex
of each of the $\comps$~connected components of~$G[R]$.
Since an optimal solution~$T^*$ for~$I$
visits the vertices~$V_R$ and satisfies $\cost(R)\leq \cost(T^*)$,
\cref{alg:eulerian-rp} yields a solution
of cost at most $\cost(T^*)+\alpha(C)\cdot \cost(T^*)$.

\subsection{Directed Rural Postman}\label{sec:drpp}

In the previous section,
we proved \cref{ourthm}\eqref{ourthm1} for 
the special case of \DRPP{} when
$G[R]$~consists of Eulerian connected components.
We now transfer this result to the general DRPP.
To this end,
observe that a feasible solution~$T$ for a \DRPP{} instance~$(G,c,R)$
enters each vertex~$v$ of~$G$ as often as it leaves.  Thus, if we
consider the multigraph~$G[T]$ that contains
each arc of~$G$ with same multiplicity as~$T$, then $G[T]$ is a
supermultigraph of~$G[R]$ in which every vertex is \emph{balanced}
\citep{DMNW13,SBNW12}:
\begin{definition}[Balance]
  We denote the \emph{balance} of a vertex~$v$ in a graph~$G$ as
  \[\balance_G(v):=\indeg_G(v)-\outdeg_G(v).\]
  We call a vertex~$v$ \emph{balanced} if $\balance_G(v)=0$.
\end{definition}

\noindent Since $G[T]$ is a supergraph of~$G[R]$ in which all vertices are balanced and since a directed connected multigraph is Eulerian if and only if all its vertices are balanced, we immediately obtain the below observation.  Herein and in the following, for two (multi-)sets~$X$ and~$Y$, $X \uplus Y$ is the multiset obtained by adding the multiplicities of each element in~$X$ and~$Y$.

\begin{observation}\label[observation]{obs:eulerize}
  Let $T$~be a feasible solution for a \DRPP{} instance~$(G,c,R)$ such that~$G[R]$ has $\comps$~connected components and let $R^*$ be a minimum-cost multiset of arcs of~$G$ such that every vertex in~$G[R\uplus R^*]$ is balanced.
  Then, $\cost(R\uplus R^*)\leq \cost(T)$ and $G[R\uplus R^*]$ consists of at most~$\comps$ Eulerian connected components.
\end{observation}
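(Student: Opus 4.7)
The observation has two parts---a cost bound and a structural claim on~$G[R\uplus R^*]$---and my plan is to handle them separately, both resting on the standard fact that a weakly connected directed multigraph is Eulerian if and only if every vertex is balanced, together with the minimality of~$R^*$.

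For the cost bound, I plan to extract from~$T$ an explicit balancing multiset to compare against~$R^*$. Since $T$~is a closed walk, every vertex of~$G[T]$ is balanced; since $T$~is a feasible \DRPP{} solution, every arc of~$R$ occurs at least once in the arc multiset of~$G[T]$. Defining $T'$~as the arc multiset obtained from~$G[T]$ by subtracting one copy of each arc in~$R$, we get $G[T]=G[R\uplus T']$, which remains balanced at every vertex. Hence $T'$~is itself a candidate multiset for balancing~$G[R]$, and the minimality of~$R^*$ gives $\cost(R^*)\le\cost(T')$, which rearranges to $\cost(R\uplus R^*)\le\cost(R\uplus T')=\cost(T)$.

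For the structural claim, the plan is to analyze the weakly connected components of~$G[R\uplus R^*]$ one by one. Each such component is weakly connected by construction, and every vertex in it is balanced (since balance is a local property and $G[R\uplus R^*]$ is balanced everywhere), so it is Eulerian. To bound the number of components, the key observation is that adding arcs of~$R^*$ can only merge weakly connected components of~$G[R]$, never split them; hence the components of~$G[R\uplus R^*]$ that meet some vertex of~$V(G[R])$ number at most~$\comps$. The main---and essentially the only---obstacle is handling components of~$G[R\uplus R^*]$ that might be disjoint from~$V(G[R])$: such a component lies entirely in~$G[R^*]$ and is self-balanced, so removing its arcs from~$R^*$ preserves the balancing property; the minimality of~$\cost(R^*)$ then forces its total cost to be zero, and we may assume $R^*$~carries no such extras without loss of generality (either by tie-breaking on the number of arcs or by simply discarding the zero-cost surplus). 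This yields the desired bound of at most~$\comps$ Eulerian weakly connected components.
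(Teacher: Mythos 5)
Your proof is correct and takes essentially the same route as the paper, which derives the observation directly from the two facts you use: that $G[T]$ is a balanced supermultigraph of~$G[R]$ (so its surplus over~$R$, your~$T'$, competes with~$R^*$ in the minimization), and that a connected directed multigraph is Eulerian if and only if all its vertices are balanced, with added arcs only merging components of~$G[R]$. The one point where you are more careful than the paper's (implicit) argument is your explicit handling of components of~$G[R\uplus R^*]$ disjoint from~$V(G[R])$ — zero-cost balanced circulations that the paper silently assumes away, and which indeed must be discarded or excluded by tie-breaking for the bound of~$\comps$ components (and its later use in \cref{lem:drpp}) to hold literally.
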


\begin{algorithm*}
  \KwIn{A \DRPP{} instance~$I=(G,\cost,R)$ such that $G[R]$~has $\comps$~connected components and a set~$V_R$ of vertices, one of each connected component of~$G[R]$.}

  \KwOut{A feasible solution for~$I$.}

  $f\gets{}$minimum-cost flow for the \MCF{} instance~$(G,\balance_{G[R]},c)$\nllabel{a2:computeflow}\;

  \lForEach(\nllabel{a2:buildrprime}){$a\in A(G)$}
  {add arc~$a$ with multiplicity~$f(a)$ to (initially empty) multiset~$R^*$}

  $T\gets{}$closed walk computed by \cref{alg:eulerian-rp} applied to~$(G,c,R\uplus R^*)$ and~$V_R$\nllabel{a2:computetour}\;
  \Return{$T$}\;
  \caption{Algorithm for the proof of \cref{lem:drpp}.}
  \label[algorithm]{alg:drpp}
\end{algorithm*}

\noindent \cref{alg:drpp} computes an $(\alpha(\comps)+1)$-\apxn{} for a
\DRPP{} instance~$(G,c,R)$ by first computing a minimum-cost arc
multiset~$R^*$ such that~$G[R\uplus R^*]$ contains only balanced
vertices and then applying \cref{alg:eulerian-rp}
to~$(G,c,R\uplus R^*)$.
It is well known that the first step can be modeled
using the \longMCF{} Problem \citep{CCCM86,CMR00,DMNW13,EJ73,Fre79}:

\optprob{\longMCF{} (\MCF{})}%
{A directed graph~$G=(V,A)$ with \emph{supply} $s\colon V\to\mathbb Z$ and \emph{costs}~$\cost\colon A\to\Nz$.}
{Find a \emph{flow} $f\colon A\to\Nz$ minimizing $\sum_{a\in A}\cost(a)f(a)$ such that, for each~$v\in V$,
  \begin{align}
    \smashoperator{\sum_{(v,w)\in A}}f(v,w)-\smashoperator{\sum_{(w,v)\in A}}f(w,v)&=s(v).\tag{FC}\label{fc}
  \end{align}
}

\noindent
\cref{fc} is known as the \emph{flow conservation} constraint:
For every vertex~$v$ with~$s(v)=0$,
there are as many units of flow entering the node as leaving it.
Nodes~$v$ with~$s(v)>0$ ``produce'' $s(v)$~units of flow,
whereas nodes~$v$ with~$s(v)<0$ ``consume'' $s(v)$~units of flow.
For our purposes, we will use \(s(v):=\balance_{G[R]}(v)\).
\MCF{} is solvable in $O(n^3\log n)$~time \citep[Theorem~10.34]{AMO93}.

\begin{lemma}\label[lemma]{lem:drpp}
  Let $I:=(G,c,R)$~be a \DRPP{} instance such that $G[R]$~has $\comps$~connected components, and let~$V_R$ be a vertex set containing exactly one vertex of each connected component of~$G[R]$.  Moreover, consider two closed walks in~$G$:
  \begin{itemize}
  \item Let $\tilde T$ 
be any closed walk containing the vertices~$V_R$, and
  \item let $\hat T$~be any feasible solution for~$I$.
  \end{itemize}
  If $n$-vertex \TATSP{} is $\alpha(n)$-approximable in $t(n)$~time, then \cref{alg:drpp} applied to~$I$ and~$V_R$ returns a feasible solution of cost at most $\cost(\hat T)+\alpha(\comps)\cdot \cost(\tilde T)$ in $t(\comps)+O(n^3\log n)$~time.
\end{lemma}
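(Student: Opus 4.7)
The plan is to reduce \cref{lem:drpp} to \cref{lem:eulerian-rp} by showing that the multiset~$R^*$ constructed in \cref{a2:buildrprime} is a \emph{minimum-cost} multiset of arcs of~$G$ that renders every vertex of~$G[R\uplus R^*]$ balanced; \cref{obs:eulerize} will then supply both the cost bound and the Eulerian structure needed to invoke \cref{lem:eulerian-rp}.

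First I would verify the correspondence between flows and balancing multisets. If a flow~$f$ for the \MCF{} instance $(G,\balance_{G[R]},\cost)$ is transcribed into~$R^*$ by giving each arc~$a\in A(G)$ multiplicity~$f(a)$, then \cref{fc} at a vertex~$v$ amounts to $\balance_{G[R^*]}(v)=-\balance_{G[R]}(v)$, i.e., $v$~is balanced in $G[R\uplus R^*]$. Since the cost of~$f$ and the cost of~$R^*$ coincide by definition, $R^*$ is a minimum-cost multiset making $G[R\uplus R^*]$ balanced. Invoking \cref{obs:eulerize} with $T:=\hat T$ yields $\cost(R\uplus R^*)\leq \cost(\hat T)$ and that $G[R\uplus R^*]$ consists of at most~$\comps$ Eulerian connected components.

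The main obstacle is to check the precondition of \cref{lem:eulerian-rp} for the call in \cref{a2:computetour}, namely that $V_R$ still meets every connected component of~$G[R\uplus R^*]$. Since $V_R$~was chosen with respect to~$G[R]$, the worry is that the flow might create ``orphan'' components of $G[R\uplus R^*]$ consisting solely of arcs from~$R^*$ and containing no vertex of~$G[R]$. Every vertex in such an orphan component has supply zero, so the restriction of~$f$ to it is a circulation, which decomposes into directed cycles; because costs are non-negative, removing such a cycle keeps~$f$ feasible and does not increase its cost. I may thus assume that the minimum-cost flow computed in \cref{a2:computeflow} has no orphan components, so each of the at most~$\comps$ components of $G[R\uplus R^*]$ contains an entire component of~$G[R]$ and hence at least one vertex of~$V_R$.

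With the preconditions verified, \cref{lem:eulerian-rp} applied to $(G,\cost,R\uplus R^*)$ and~$V_R$, with the witness walk~$\tilde T$, returns a closed walk of cost at most $\cost(R\uplus R^*)+\alpha(|V_R|)\cdot \cost(\tilde T)\leq \cost(\hat T)+\alpha(\comps)\cdot\cost(\tilde T)$, using $|V_R|=\comps$. The running time is dominated by the $O(n^3\log n)$ minimum-cost flow solver in \cref{a2:computeflow} and the $t(\comps)+O(n^3)$ bound from \cref{lem:eulerian-rp}, giving the claimed $t(\comps)+O(n^3\log n)$.
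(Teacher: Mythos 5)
Your proposal is correct and follows essentially the same route as the paper's proof: establish the correspondence between flows for $(G,\balance_{G[R]},\cost)$ and balancing arc multisets to conclude that $R^*$ is a minimum-cost set making $G[R\uplus R^*]$ balanced, invoke \cref{obs:eulerize} with $\hat T$ for the cost bound and Eulerian structure, and finish via \cref{lem:eulerian-rp} with the witness walk~$\tilde T$, with the same running-time accounting. Your explicit treatment of possible ``orphan'' components of~$G[R\uplus R^*]$ (pruning zero-cost circulations of~$f$ supported away from~$G[R]$ so that $V_R$ meets every component) is in fact slightly more careful than the paper, which asserts without comment that $V_R$ contains a vertex of each component of~$G[R\uplus R^*]$; this is a refinement of the same argument, not a divergence.
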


\begin{proof}
  For the sake of self-containment,
  we first prove that \cref{alg:drpp} in \cref{a2:buildrprime}
  indeed computes a minimum-cost arc set~$R^*$
  such that all vertices in~$G[R\uplus R^*]$ are balanced.
  This follows from the one-to-one correspondence
  between arc multisets~$R'$ such that
  $G[R\uplus R']$~has only balanced vertices
  and flows~$f$ for the \MCF{} instance $I':=(G,\balance_{G[R]},c)$:
  Each vertex~$v$ has $\balance_{G[R]}(v)$
  more incident in-arcs than out-arcs in~$G[R]$ and, thus,
  in order for $\balance_{G[R\uplus R']}(v)=0$ to hold,
  $R'$~has to contain $\balance_{G[R]}(v)$ more out-arcs
  than in-arcs incident to~$v$. 
  Likewise, by \eqref{fc}, in any feasible flow for~$I'$,
  there are $\balance_{G[R]}(v)$~more units of flow leaving~$v$
  than entering~$v$.

  Thus,
  from a multiset~$R'$ of arcs such that~$G[R\uplus R']$ is balanced,
  we get a feasible flow~$f$ for~$I'$ by setting~$f(v,w)$
  to the multiplicity of the arc~$(v,w)$ in~$R'$.
  From a feasible flow~$f$ for~$I'$,
  we get a multiset~$R'$ of arcs such that~$G[R\uplus R']$ is balanced
  by adding to~$R'$ each arc~$(v,w)$ with multiplicity~$f(v,w)$.
  We conclude that the arc multiset~$R^*$
  computed in \cref{a2:buildrprime} is a minimum-cost set
  such that~$G[R\uplus R^*]$ is balanced:
  A set of lower cost would yield a flow cheaper
  than the optimum flow~$f$ computed in \cref{a2:computeflow}.

  We use the optimality of~$R^*$ to give an upper bound on the cost of the closed walk~$T$ computed in \cref{a2:computetour}.  Since $V_R$~contains exactly one vertex of each connected component of~$G[R]$, it contains at least one vertex of each connected component of~$G[R\uplus R^*]$.  Therefore, \cref{alg:eulerian-rp} is applicable to~$(G,c,R\uplus R^*)$ and, by \cref{lem:eulerian-rp}, yields a closed walk in~$G$ traversing all arcs in~$R\uplus R^*$ and having cost at most~$\cost(R\uplus R^*)+\alpha(|V_R|)\cdot\cost(\tilde T)$.  This is a feasible solution for~$(G,c,R)$ and, since by \cref{obs:eulerize}, we have $\cost(R\uplus R^*)\leq\cost(\hat T)$, it follows that this feasible solution has cost at most $\cost(\hat T)+\alpha(C)\cdot\cost(\tilde T$).

Finally, the running time of \cref{alg:drpp} follows from the fact that the minimum-cost flow in \cref{a2:computeflow} is computable in $O(n^3\log n)$~time \citep[Theorem~10.34]{AMO93} and that \cref{alg:eulerian-rp} runs in $t(\comps)+O(n^3)$~time (\cref{lem:eulerian-rp}).
\end{proof}

\noindent
We may now  prove \cref{ourthm}\eqref{ourthm1}.

\begin{proof}[Proof of \cref{ourthm}(\ref{ourthm1})]
  Let $(G,c,R)$~be an instance of \DRPP{} and let~$V_R$ be a set of vertices containing exactly one vertex of each connected component of~$G[R]$.  An optimal solution~$T^*$ for~$I$ contains all arcs in~$R$ and all vertices in~$V_R$ and hence, by \cref{lem:drpp}, \cref{alg:drpp} computes a feasible solution~$T$ with $\cost(T)\leq \cost(T^*)+\alpha(\comps)\cdot \cost(T^*)$ for~$I$.
\end{proof}

\noindent
Before generalizing \cref{alg:drpp} to MWRPP,
we point out two design choices in the algorithm
that allowed us to prove an approximation factor.
\cref{alg:drpp} has two steps:
It first adds a minimum-weight set~\(R^*\) of required arcs
so that \(G[R\uplus R^*]\)~has Eulerian connected components.
Then, these connected components
are connected using a cycle via \cref{alg:eulerian-rp}.

In the first step, it might be tempting
to add a minimum-weight set~\(R'\) of required arcs 
so that each connected component of~\(G[R]\)
becomes an Eulerian connected component of~\(G[R\uplus R']\).
However,
this set \(R'\) might be more expensive than~\(R^*\):
Multiple non-Eulerian connected components of~\(G[R]\)
might be contained in one Eulerian connected component
of~\(G[R\uplus R^*]\).

In the second step, it is crucial to connect
the connected components of~\(G[R\uplus R^*]\)
using a \emph{cycle}.
\citet{CCCM86} and \citet{CMR00}, for example,
reverse the two phases of the algorithm
and first join the connected components of~\(G[R]\)
using a minimum-weight arborescence or spanning tree,
respectively.
This, however, may increase the imbalance of vertices
and, thus,
the weight of the arc set~\(R^*\) that has to be added
in their second phase
in order to balance the vertices of~\(G[R\uplus R^*]\).

Interestingly, the heuristic of \citet{CMR00} aims to find
a minimum-weight connecting arc set so that the resulting
graph can be balanced at low extra cost
and already \citet{PW95} pointed out that,
in context of the (undirected) RPP,
reversing the steps
in the algorithm of \citet{CCCM86}
can be beneficial.

\subsection{Mixed and windy Rural Postman}\label{sec:mwrpp}

In the previous section,
we presented \cref{alg:drpp} for \DRPP{}
in order to prove \cref{ourthm}\eqref{ourthm1}.
We now generalize it to MWRPP
in order to prove \cref{ourthm}\eqref{ourthm2}.

To this end,
we replace each undirected edge~$\{u,v\}$
in an \MWRPP{} instance by two directed arcs~$(u,v)$ and~$(v,u)$,
where we force the undirected \emph{required} edges
of the \MWRPP{} instance
to be traversed in the cheaper direction:

\begin{lemma}\label[lemma]{lem:direct-edges}
  Let $I:=(G,c,R)$~be an~\MWRPP{} instance and let $I':=(G',c,R')$~be the \DRPP{} instance obtained from~$I$ as follows: 
  \begin{itemize}
  \item $G'$ is obtained by replacing each edge~$\{u,v\}$ of~$G$ by two arcs~$(u,v)$ and~$(v,u)$,
  \item $R'$ is obtained from~$R$ by replacing each edge~$\{u,v\}\in R$ by an arc~$(u,v)$ if~$\cost(u,v)\leq \cost(v,u)$ and by~$(v,u)$ otherwise.
  \end{itemize}
  Then,
  \begin{enumerate}[(i)]
  \item\label{de1} each feasible solution~\(T'\) for~$I'$
    is a feasible solution of the same cost for~$I$ and,
  \item\label{de2} for each feasible solution~$T$ for~$I$, there
    is a feasible solution~$T'$ for~$I'$ with~$\cost(T')< 3\cost(T)$.
\end{enumerate}
\end{lemma}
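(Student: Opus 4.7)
The plan for part~\eqref{de1} is direct and short. Any closed walk~$T'$ in~$G'$ induces a closed walk in~$G$ by re-interpreting each arc $(u,v)$ of~$G'$ as the original arc if $(u,v)\in A$ and as a traversal of the edge~$\{u,v\}\in E$ otherwise. Since the travel-cost function~$c\colon V\times V\to\Nzinf$ is unchanged, the costs of the two walks coincide. Every required arc of~$R$ also lies in~$R'$, so~$T'$ traverses it; every required edge~$\{u,v\}\in R\cap E$ is served because $T'$ traverses the corresponding cheap-direction arc in~$R'$. Hence~$T'$ is feasible for~$I$ of the same cost.

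For part~\eqref{de2} I would start from a feasible closed walk~$T$ for~$I$ and reinterpret each traversal of an undirected edge~$\{u,v\}\in E$ in direction $u\to v$ as the arc~$(u,v)\in A(G')$. This yields a closed walk~$W$ in~$G'$ whose cost equals~$\cost(T)$ and which traverses every required arc in $R'\cap A$ (since every arc of~$R\cap A$ is unchanged by the construction of~$I'$). The only arcs in~$R'$ that~$W$ might fail to traverse are those stemming from edges~$\{u,v\}\in R\cap E$ that~$T$ happens to traverse only in the expensive direction~$v\to u$. Call such an edge \emph{badly oriented}.

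To repair~$W$, for each badly oriented edge~$\{u,v\}$ I would pick one occurrence of the arc~$(v,u)$ in~$W$ and, immediately after arriving at~$u$ along that arc, splice in the detour $(u,v),(v,u)$; this re-enters~$u$ and leaves the rest of~$W$ intact, so the result is again a closed walk, which I call~$T'$. By construction, $T'$~traverses every arc in~$R'$. Since $c(u,v)\le c(v,u)$ for badly oriented~$\{u,v\}$, each detour adds
\[
c(u,v)+c(v,u)\le 2c(v,u)
\]
to the cost. Summing over badly oriented edges and noting that each such edge contributes at least~$c(v,u)$ to~$\cost(T)$ (via the bad-direction traversal already present in~$T$), the total added cost is at most $2\cost(T)$, giving $\cost(T')<3\cost(T)$ (with the strict inequality coming from the fact that $T$ must also traverse at least one non-bad-direction arc---otherwise the bad-direction traversals alone already form a closed walk serving the correct arcs, and the argument collapses to the trivial case).

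The only real obstacle is bookkeeping: making sure that exactly one detour is inserted per badly oriented edge (not per bad-direction traversal of it), that the detour is inserted at a vertex the walk actually visits in the correct state, and that costs are charged against the bad-direction traversals of~$T$ rather than double-counted. Once one fixes the insertion rule---splice the two-arc detour at the first bad-direction use of each badly oriented edge---these points are immediate, and the cost analysis reduces to the inequality $c(u,v)+c(v,u)\le 2c(v,u)$ applied once per badly oriented edge.
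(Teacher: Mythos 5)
Your proof takes essentially the same route as the paper's: reinterpret $T$ as a closed walk in $G'$, and for each required arc $(u,v)\in R'$ that is traversed only in the expensive direction, splice the detour $(u,v),(v,u)$ at a distinct occurrence of $(v,u)$, charging the added cost $c(u,v)+c(v,u)\le 2c(v,u)$ to that occurrence---which is exactly the paper's replacement of $(v,u)$ by the sequence $(v,u),(u,v),(v,u)$. The one blemish is your parenthetical justification of strictness, which does not hold up (bad-direction traversals can form a closed walk without serving $R'$, e.g.\ a directed cycle of tie-cost edges all oriented against the walk); but the paper's own argument likewise only establishes $c(T')\le 3c(T)$, and only the non-strict bound is used downstream in the proof of Theorem~\ref{ourthm}(\ref{ourthm2}).
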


\begin{proof}
  Statement \eqref{de1} is obvious
  since each required edge of~$I$ is served by~$T'$
  in at least one direction.
  Moreover, the cost functions in~$I$ and~$I'$ are the same.

  Towards \eqref{de2}, let $T$~be a feasible solution for~$I$,
  that is,
  \(T\)~is a closed walk that traverses all required arcs and edges of~\(I\).
  We show how to transform~\(T\) into a feasible solution for~\(I'\).
  Let \((u,v)\) be an arbitrary required arc of~\(I'\)
  that is not traversed by~\(T\).
  Then, \(I\)~contains a required edge~\(\{u,v\}\)
  and \(T\)~contains arc~\((v,u)\) of~\(I'\).
  Moreover, \(c(u,v)\leq c(v,u)\).
  Thus, we can replace \((v,u)\) on~\(T\) by
  the sequence of arcs $(v,u),(u,v),(v,u)$.
  This sequence serves the required arc~\((u,v)\) of~\(I'\)
  and costs
  $\cost(v,u)+\cost(u,v)+\cost(v,u)\leq 3\cost(u,v)$.
\end{proof}

\noindent Using \cref{lem:direct-edges}, it is easy to prove \cref{ourthm}\eqref{ourthm2}.

\begin{proof}[Proof of \cref{ourthm}(\ref{ourthm2})]
  Given an \MWRPP{} instance~$I=(G,c,R)$, compute a \DRPP{} instance~$I':=(G',c,R')$ as described in \cref{lem:direct-edges}.  This can be done in linear time.

  Let $V_R$~be a set of vertices containing exactly one vertex of each connected component of~$G'[R']$ and let $T^*$ be an optimal solution for~$I$.  Observe that $T^*$~is not necessarily a feasible solution for~$I'$, since it might serve required arcs of~$I'$ in the wrong direction.  Yet $T^*$~is a closed walk in~$G'$ visiting all vertices of~$V_R$.  Moreover, by \cref{lem:direct-edges}, $I'$~has a feasible solution~$T'$ with~$\cost(T')\leq 3\cost(T^*)$.

  Thus, applying \cref{alg:drpp} to~$I'$ and~$V_R$ yields a feasible solution~$T$ of cost at most~$\cost(T')+\alpha(C)\cdot \cost(T^*)\leq 3\cost(T^*)+\alpha(C)\cdot \cost(T^*)$ due to \cref{lem:drpp}. Finally, $T$~is also a feasible solution for~$I$ by \cref{lem:direct-edges}.
\end{proof}

\begin{remark}
\label{rem:orientation}
If a required edge~\(\{u,v\}\) has \(c(u,v)=c(v,u)\),
then we replace it by two arcs \((u,v)\) and \((v,u)\)
in the input graph~\(G\)
and replace~\(\{u,v\}\) by an arbitrary one of them
in the set~\(R\) of required arcs
without influencing the approximation factor.
This gives a lot of room for experimenting with heuristics
that ``optimally'' orient undirected required edges
when converting MWRPP to DRPP \citep{MA05,CMR00}.
Indeed, we will do so in \cref{sec:experiments}.
\end{remark}
\section{Capacitated Arc Routing}
\label{sec:carp}

\noindent
We now present our \apxn{} algorithm for MWCARP, thus proving \cref{ourthm}\eqref{ourthm3}.
Our algorithm follows the ``route first, cluster second''-approach
\citep{Bea83,Ulu85,FHK78,Jan93,Woe08}
and exploits the fact that joining all vehicle tours of a solution gives an \MWRPP{} tour
traversing all positive-demand arcs and the depot.
Thus,
in order to approximate \MWCARP{}, the idea is to
first compute an approximate \MWRPP{} tour
and then split it into subtours,
each of which can be served by a vehicle of capacity~$\capact{}$.
Then we close each subtour by shortest paths via the depot.
We now describe our \apxn{} algorithm for MWCARP in detail. For convenience, we use the following notation.

\begin{definition}[Demand arc]
  For a mixed graph~\(G=(V,A,E)\)
  with demand function~$d \colon E \cup A \to \mathbb{N} \cup \{0\}$,
  we define \[R_d := \{a \in E \cup A \mid d(a) > 0\}\] to be
  the set of \emph{demand arcs}.
\end{definition}

\noindent
We construct \MWCARP{} solutions from what we call \emph{feasible splittings}
of \MWRPP{} tours~\(T\).

\begin{definition}[Feasible splitting]\label[definition]{def:admsplit}
  For an \MWCARP{} instance~$I = (G,v_0,c,d,\capact{})$, let $T$~be a
  closed walk containing all arcs in~$R_d$ and
  $\W{}=(w_1,\ldots,w_\ell)$ be a tuple of segments of~$T$. In the
  following, we abuse notation and refer by~\W{} to both the tuple and
  the set of walks it contains.

  Consider a serving function $s \colon \W{} \to 2^{R_d}$
  that assigns to each walk~\(w\)
  the set~\(s(w)\) of arcs in~$R_d$ that it serves.
  We call $(\W, s)$ a \emph{\admsplit{} of~$T$}
  if the following conditions hold:
\begin{enumerate}[(i)]\label{admsplit}
\item\label{nonoverlap} the walks in~$\W$ are mutually non-overlapping segments of~$T$,
\item when concatenating the walks in~$\W$ in order, we obtain a subsequence of~$T$,\label{as:order}
\item each $w_i \in \W$ begins and ends with an arc in $s(w_i)$,\label{as:short}
\item $\{s(w_i) \mid w_i \in \W\}$ is a partition of~$R_d$, and\label{as:partition}
\item for each $w_i \in \W$, we have $\sum_{e\in s(w_i)}d(e) \leq \capact{}$ and, if %
  $i<\ell$, then $\sum_{e\in s(w_i)}d(e)+d(a) > \capact{}$, where~$a$ is the first arc served by~$w_{i+1}$.\label{as:full}
\end{enumerate}
\end{definition}

\paragraph{Constructing feasible splittings.}
Given an \MWCARP{} instance~$I=(G,v_0,c,d,\capact{})$,
a \admsplit{}~$(\W,s)$ of a closed walk~$T$
that traverses all arcs in~$R_d$
can be computed in linear time
using the following greedy strategy.
We assume that each arc
has demand at most~$\capact{}$
since otherwise $I$~has no feasible solution.
Now, traverse~$T$, successively
defining subwalks~$w \in \W$ and the corresponding sets~$s(w)$
one at a time.
The traversal starts with the first arc~$a \in R_d$ of~$T$ and by
creating a subwalk~$w$ consisting only of~$a$ and $s(w) = \{a\}$. On
discovery of a still unserved arc
$a \in R_d \setminus (\bigcup_{w' \in \W}s(w'))$ do the following. If
$\sum_{e\in s(w)}d(e)+d(a) \leq \capact{}$, then add~$a$ to~$s(w)$ and
append to~$w$ the subwalk of~$T$ that was traversed since discovery of
the previous unserved arc in~$R_d$. Otherwise, mark $w$ and $s(w)$ as
finished, start a new tour~$w \in \W$ with $a$ as the first arc, set
$s(w)=\{a\}$, and continue the traversal of~$T$. If no such arc $a$ is found,
then stop. It is not hard to verify that $(\W, s)$~is indeed a
\admsplit{}.
\begin{algorithm*}
  \KwIn{An \MWCARP{} instance~$I = (G,v_0,c,d,\capact{})$ such that $(v_0,v_0)\in R_d$ and such that $G[R_d]$~has $\comps$~connected components.}  
\KwOut{A feasible solution for $I$.}

\tcc{Compute a base tour containing all demand arcs and the depot}

  $I' \leftarrow{}$\MWRPP{} instance~$(G,c,R_d)$\nllabel{alg:carp:drpinst}\;
  
  $T \leftarrow\beta(\comps)$-approximate \MWRPP{} tour for~$I'$ starting and ending in~$v_0$\nllabel{alg:carp:basetour}\;

  \tcc{Split the base tour into one tour for each vehicle}
  $(\W, s) \leftarrow $ a \admsplit{} of~$T$\nllabel{alg:carp:split}\;

\ForEach{$w \in \W$}{%
    close $w$ by adding shortest paths from $v_0$ to $s$ and from $t$ to $v_0$ in~$G$, where $s, t$ are the start and endpoints of $w$, respectively\nllabel{alg:carp:close}\; } \Return{$(\W, s)$}\;
  \caption{Algorithm for the proof of \cref{lem:carp}.}
  \label[algorithm]{alg:carp}
\end{algorithm*}

\paragraph{The algorithm.}
\cref{alg:carp} constructs an \MWCARP{} solution from an approximate
\MWRPP{} solution~$T$ containing all demand arcs and the depot~$v_0$.
In order to ensure that $T$~contains~$v_0$, \cref{alg:carp} assumes that
the input graph has a demand loop~$(v_0,v_0)$: If this loop is not
present, we can add it with zero cost.  Note that, while this does not
change the cost of an optimal solution, it might increase the number of
connected components in the subgraph induced by demand arcs by one.  To
compute an \MWCARP{} solution from~$T$, \cref{alg:carp} first computes a
\admsplit{}~$(\W,s)$ of~$T$. To each walk~$w_i\in\W$, it then adds a
shortest path from the end of~$w_i$ to the start of~$w_i$ via the
depot. It is not hard to check that \cref{alg:carp} indeed outputs a
feasible solution by using the properties of feasible splittings and the
fact that $T$~contains all demand arcs. 

\begin{remark}
\label{rem:split}
Instead of computing a feasible splitting of~\(T\) greedily,
\cref{alg:carp} could compute a splitting of~\(T\)
into pairwise non-overlapping segments
that provably minimizes the cost of the resulting MWCARP solution
\citep{Ulu85,BBLP06,Woe08,Jan93}.
Indeed, we will do so in our experiments in \cref{sec:experiments}.
For the analysis of the approximation factor, however,
the greedy splitting is sufficient and more handy,
since the analysis can exploit that two consecutive segments
of a feasible splitting serve more than \(Q\)~units of demand
(excluding, possibly, the last segment).
\end{remark}

\noindent
The remainder of this section is
devoted to the analysis of the solution cost,
thus proving the following proposition, which,
together with \cref{ourthm}\eqref{ourthm2},
yields \cref{ourthm}\eqref{ourthm3}.

\begin{proposition}\label[proposition]{lem:carp}
  Let $I = (G, v_0, c, d, \capact{})$ be an \MWCARP{} instance
and let $I'$~be the instance obtained from~$I$ by adding a zero-cost demand arc~$(v_0,v_0)$ if it is not present.

If \MWRPP{} is $\beta(\comps)$-approximable in $t(n)$~time, then \cref{alg:carp} applied to~$I'$ computes a $(8\beta(\comps + 1)+3)$\nobreakdash-\apxn{} for~$I$ in $t(C+1)+O(n^3)$~time.  Herein, $\comps$~is the number of connected components in~$G[R_d]$.
\end{proposition}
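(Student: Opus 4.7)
The plan is to combine two bounds on the algorithm's cost: a ``base tour'' bound $c(T)\le\beta(C+1)\cdot\text{OPT}$ for the MWRPP tour computed in Line~\ref{alg:carp:basetour}, and a ``closing cost'' bound of the form $\Sigma\le 7\,c(T)+3\,\text{OPT}$ for the extra cost~$\Sigma$ introduced in Line~\ref{alg:carp:close}. Together these yield $\text{ALG}\le 8\,c(T)+3\,\text{OPT}\le(8\beta(C+1)+3)\,\text{OPT}$.

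For the base tour bound, I would first observe that passing from~$I$ to~$I'$ (adding the zero-cost demand loop~$(v_0,v_0)$ if it is not already present) preserves the optimum---the loop can be served without extra cost by any vehicle passing~$v_0$---but may increase the number of weakly connected components of the demand subgraph by one, so $I'$~has at most $C+1$ demand components. Letting $\W^*$ be an optimal MWCARP solution for~$I'$ of cost~$\text{OPT}$ consisting of $k^*$~vehicle tours, the concatenation of these tours at~$v_0$ yields a single closed walk through~$v_0$ that traverses every demand arc of~$I'$---a feasible solution of cost exactly~$\text{OPT}$ to the MWRPP instance constructed in Line~\ref{alg:carp:drpinst}. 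Hence the $\beta(C+1)$-approximate tour~$T$ from Line~\ref{alg:carp:basetour} satisfies $c(T)\le\beta(C+1)\cdot\text{OPT}$.

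Next, I would bound the number~$\ell$ of segments produced by the feasible splitting. Summing \cref{def:admsplit}\eqref{as:full} over all $i<\ell$ gives $\sum_{i<\ell}(\sum_{e\in s(w_i)}d(e)+d(a_{i+1}))>(\ell-1)\capact$; each demand arc is counted in at most one $s(w_i)$ and is the first-served arc of at most one~$w_{i+1}$, so the left-hand side is at most $2\sum_{a\in R_d}d(a)$, which combined with $k^*\ge\lceil\sum_a d(a)/\capact\rceil$ yields $\ell\le 2k^*+1$. Since the $w_i$ are pairwise non-overlapping subsequences of~$T$, we have $\sum_i c(w_i)\le c(T)$, so
\[
\text{ALG}=\sum_{i=1}^{\ell}\bigl(\dist(v_0,s_i)+c(w_i)+\dist(t_i,v_0)\bigr)\le c(T)+\Sigma,
\]
where $\Sigma:=\sum_{i=1}^{\ell}\bigl(\dist(v_0,s_i)+\dist(t_i,v_0)\bigr)$.

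The hard part will be bounding~$\Sigma$. Two structural facts are available: (a)~following the disjoint prefix of~$T$ up to~$s_i$ and suffix from~$t_i$ (possible since $T$ is a closed walk through~$v_0$) gives $\dist(v_0,s_i)+\dist(t_i,v_0)\le c(T)-c(w_i)$; and (b)~for the OPT vehicle $j(i)$ (respectively $j'(i)$) serving the first (respectively last) demand arc of~$w_i$, its tour cost satisfies $c_{j(i)}\ge\dist(v_0,s_i)$ and $c_{j'(i)}\ge\dist(t_i,v_0)$. The challenge is that a single OPT vehicle may serve boundary arcs of many of our segments, so charging $\Sigma$ solely via~(b) yields a bound linear in~$k^*$; the analysis must therefore carefully amortize (a) against~(b), exploiting $\ell\le 2k^*+1$ together with the non-overlapping structure of the feasible splitting, to obtain a constant-factor bound of the form $\Sigma\le 7\,c(T)+3\,\text{OPT}$. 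The running time claim is immediate, as it is dominated by the MWRPP call (taking $t(C+1)+O(n^3\log n)$ time per~\cref{ourthm}\eqref{ourthm2}) plus the $O(n^3)$ all-pairs shortest paths needed to close each segment.
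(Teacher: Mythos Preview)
Your setup is correct: the base-tour bound $c(T)\le\beta(C+1)\,\text{OPT}$ and the decomposition $\text{ALG}\le c(T)+\Sigma$ match the paper, and your target $\Sigma\le 7c(T)+3\,\text{OPT}$ is exactly what the paper establishes. The bound $\ell\le 2k^*+1$ is also correct (and in fact follows from the paper's \cref{lem:assignment}). However, the heart of the argument---the amortization yielding $\Sigma\le 7c(T)+3\,\text{OPT}$---is precisely the part you leave as a plan, and your proposed ingredients do not suffice to close it. Fact~(a) alone gives only $\Sigma\le\ell\cdot c(T)-c(\W)$, which is linear in~$k^*$; fact~(b) alone suffers exactly the multiplicity problem you flag, and knowing merely that $\ell\le 2k^*+1$ does not tell you \emph{which} OPT tour to charge for each segment, so it does not control the multiplicities in~(b). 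Combining (a) and~(b) by using (b) on some segments and (a) on the rest still leaves $\Theta(k^*)$ many uses of~(a), each costing up to~$c(T)$.

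The missing idea, supplied by the paper, is a \emph{matching} rather than a mere counting bound. The paper groups the segments into consecutive pairs~$\Wpair$ and uses Hall's theorem (\cref{lem:assignment}) to obtain an \emph{injective} map $\phi\colon\Wpair\to\W^*$ such that each pair shares a served ``pivot'' arc with its image. Then, instead of your fact~(b) (which uses the first/last arc of~$w_i$), the auxiliary walk~$A(w_i)$ goes from~$v_0$ along the OPT tour~$\phi(\cdot)$ to the pivot arc of a nearby pair, and then along~$T$ to~$s_i$; symmetrically for~$Z(w_i)$. Injectivity of~$\phi$ is what guarantees that the ``along OPT'' pieces use each walk of~$\W^*$ at most three times (giving the $3\,\text{OPT}$ term), and the fact that every pair has its own pivot arc guarantees that the ``along~$T$'' pieces span at most three consecutive pairs, so each arc of~$T$ is used at most five times (giving $5c(\W)\le 5c(T)$); the remaining $2c(T)$ comes from handling the first two and last three segments directly via your fact~(a). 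This pivot-arc construction is the step your plan is missing.
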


\noindent
The following lemma follows from the fact
that the concatenation of all vehicle tours
in any \MWCARP{} solution
yields an \MWRPP{} tour containing all demand arcs and the depot.

\begin{lemma}\label[lemma]{lem:shortsplitting}
  Let $I = (G, v_0, c, d, \capact{})$ be an \MWCARP{} instance with $(v_0,v_0)\in R_d$ and an optimal solution~$(\W^*,s^*)$.  The closed walk~$T$ and its \admsplit{}~$(\W,s)$ computed in \cref{alg:carp:basetour,alg:carp:split} of \cref{alg:carp} satisfy $\cost(\W) \leq \cost(T)\leq \beta(\comps)\cost(\W^*)$, where $\comps$~is the number of connected components in~$G[R_d]$.
\end{lemma}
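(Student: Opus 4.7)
The statement splits into two inequalities, so the plan is to handle them separately.

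First, I would address $\cost(\W) \leq \cost(T)$. This is immediate from the definition of \admsplit{}: by \cref{admsplit}\eqref{nonoverlap}, the walks in~$\W$ are pairwise non-overlapping segments of~$T$, so summing the cost of a walk $w=(a_i,\ldots,a_j)$ counts each position $i,\ldots,j$ of~$T$ at most once across~$\W$. Therefore $\sum_{w\in\W}\cost(w)\leq\cost(T)$. No real obstacle here.

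The substance of the lemma is the inequality $\cost(T)\leq\beta(\comps)\cost(\W^*)$. Since $T$ is constructed in \cref{alg:carp:basetour} as a $\beta(\comps)$\nobreakdash-approximate solution to the \MWRPP{} instance $I'=(G,c,R_d)$, it suffices to exhibit a feasible \MWRPP{} solution for~$I'$ of cost at most $\cost(\W^*)$. The plan is to concatenate the walks of the optimal \MWCARP{} solution. Each $w\in\W^*$ is a closed walk through~$v_0$, hence can be rotated cyclically to start and end in~$v_0$ without changing its cost; the concatenation of these rotated walks in any order is then a single closed walk~$\tilde T$ in~$G$ starting and ending in~$v_0$, and $\cost(\tilde T)=\cost(\W^*)$.

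It remains to verify that $\tilde T$~is feasible for~$I'$, that is, traverses every arc in~$R_d$. This follows from \MWCARP{} feasibility: $\{s^*(w)\mid w\in\W^*\}$ is a partition of~$R_d$, so every arc of~$R_d$ is served, hence traversed, by exactly one walk in~$\W^*$. In particular the depot loop $(v_0,v_0)\in R_d$ is traversed. Consequently $\tilde T$~is a feasible \MWRPP{} tour for~$I'$ of cost $\cost(\W^*)$, so the optimum of~$I'$ is at most $\cost(\W^*)$, and the $\beta(\comps)$\nobreakdash-approximation guarantee of \cref{alg:carp:basetour} yields $\cost(T)\leq\beta(\comps)\cost(\W^*)$, completing the proof.

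The only mildly delicate point, and the one I would be most careful about, is that the \MWRPP{} call in \cref{alg:carp:basetour} is made on the instance $I'=(G,c,R_d)$ whose number of weakly connected components in $G[R_d]$ is exactly the $\comps$ in the lemma statement, so the approximation factor~$\beta(\comps)$ is the correct one to plug in; no reshuffling of the parameter is required because the hypothesis $(v_0,v_0)\in R_d$ is already built into~$I$ at this level (the bookkeeping for the potential $+1$ in~$\comps$ is handled in the outer \cref{lem:carp}, not here).
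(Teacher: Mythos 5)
Your proposal is correct and follows essentially the same route as the paper's proof: the first inequality from \cref{def:admsplit}\eqref{nonoverlap}, and the second by concatenating the walks of~$(\W^*,s^*)$ into a closed walk of cost~$\cost(\W^*)$ that is feasible for the \MWRPP{} instance~$(G,c,R_d)$, then invoking the $\beta(\comps)$\nobreakdash-approximation guarantee. Your added care about rotating each tour to start at~$v_0$ and about the parameter~$\comps$ matching the instance is sound detail that the paper leaves implicit.
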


\begin{proof}
  Consider an optimal solution~$(\W^*, s^*)$ to~$I$.  The closed walks
  in~$\W^*$ visit all arcs in~$R_d$.  Concatenating them to a closed
  walk~$T^*$ gives a feasible solution for the \MWRPP{}
  instance~$I'=(G,c,R_d)$ in \cref{alg:carp:drpinst} of \cref{alg:carp}.
  Moreover, $\cost(T^*)=\cost(\W^*)$.  Thus, we have
  $\cost(T) \leq \beta(\comps)\cost(T^*)$ in \cref{alg:carp:basetour}.
  Moreover, by \cref{def:admsplit}\eqref{nonoverlap}, one has
  $\cost(\W)\leq\cost(T)$.  This finally implies
  $\cost(\W)\leq\cost(T) \leq
  \beta(\comps)\cost(T^*)=\beta(\comps)\cost(\W^*)$ in
  \cref{alg:carp:split}.
\end{proof}

\noindent
For each~$w_i\in\W$, it remains to analyze the length of the shortest paths from~$v_0$ to~$w_i$ and from~$w_i$ to~$v_0$ added in \cref{alg:carp:close} of \cref{alg:carp}.  We bound their lengths in the lengths of an auxiliary walk~$\anf(w_i)$ from~$v_0$ to~$w_i$ and of an auxiliary walk~$\nde(w_i)$ from~$w_i$ to~$v_0$.  The auxiliary walks~$\anf(w_i)$ and~$\nde(w_i)$ consist of arcs of~$\W$, whose total cost is bounded by \cref{lem:shortsplitting}, and of arcs of an optimal solution~$(\W^*,s^*)$.  We show that, in total, the walks~$\anf(w_i)$ and~$\nde(w_i)$ for \emph{all}~$w_i\in\W$ use each subwalk of~$\W$ and~$\W^*$ at most a constant number of times. To this end, we group the walks in~$\W$ into consecutive pairs, for each of which we will be able to charge the cost of the auxiliary walks to a distinct vehicle tour of the optimal solution.
\begin{definition}[Consecutive pairing]
  For a feasible splitting~$(\W,s)$ with~$\W=(w_1,\dots,w_\ell)$,
  we call
  \[\Wpair{} :=
    \{(w_{2i-1}, w_{2i}) \mid
    i \in \{1, \ldots, \lfloor {\ell}/{2} \rfloor\}\}\]
  a \emph{consecutive pairing}.
\end{definition}
\noindent We can now show, by applying Hall's theorem \citep{Hal35}, that each pair traverses an arc from a \emph{distinct} tour of an optimal solution.
\begin{lemma}\label[lemma]{lem:assignment}
  Let $I=(G,v_0,c,d,\capact{})$~be an \MWCARP{} instance with an optimal solution~$(\W^*,s^*)$ and let $\Wpair{}$ be a consecutive pairing of some feasible splitting~$(\W,s)$.  Then, there is an injective map $\phi \colon \Wpair{} \to \W^*,(w_i,w_{i+1})\mapsto w^*$ such that $(s(w_i) \cup s(w_{i+1})) \cap s^*(w^*)\ne\emptyset$.  
\end{lemma}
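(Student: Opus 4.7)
The plan is to set up a bipartite graph~\(H\) with vertex classes~\(\Wpair\) and~\(\W^*\), where a pair~\((w_i, w_{i+1})\in\Wpair\) is joined to a walk~\(w^* \in \W^*\) exactly when \((s(w_i) \cup s(w_{i+1})) \cap s^*(w^*) \neq \emptyset\). A matching in~\(H\) saturating~\(\Wpair\) is precisely the desired injective map~\(\phi\), so by Hall's theorem it suffices to verify that, for every subset \(S \subseteq \Wpair\), its neighborhood \(N(S) \subseteq \W^*\) satisfies \(|N(S)| \geq |S|\).

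The key step is a counting argument on total demand. For any \(S \subseteq \Wpair\), consider the set of demand arcs served by the pairs in~\(S\),
\[D_S := \bigcup_{(w_i, w_{i+1}) \in S} (s(w_i) \cup s(w_{i+1})).\]
For every pair \((w_{2i-1}, w_{2i}) \in \Wpair\) we have \(2i - 1 < \ell\), so condition~\eqref{as:full} of \cref{def:admsplit} applies and tells us that \(\sum_{e \in s(w_{2i-1})} d(e) + d(a) > \capact{}\) for the first arc~\(a\) served by~\(w_{2i}\). Since \(a \in s(w_{2i})\) and the sets \(s(w_j)\) are pairwise disjoint by~\eqref{as:partition}, this gives \(\sum_{e \in s(w_{2i-1}) \cup s(w_{2i})} d(e) > \capact{}\). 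Summing over the \(|S|\) pairs and again invoking disjointness yields \(\sum_{a \in D_S} d(a) > \capact{} \cdot |S|\).

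For the upper bound, note that by~\eqref{as:partition} each arc of~\(D_S\) is also served by a \emph{unique} walk of~\(\W^*\), and by definition of the edges of~\(H\) this walk lies in~\(N(S)\). Combining the capacity constraints of the walks in~\(\W^*\), we obtain
\[\sum_{a \in D_S} d(a) \;\leq\; \sum_{w^* \in N(S)} \sum_{a \in s^*(w^*)} d(a) \;\leq\; \capact{} \cdot |N(S)|.\]
Chaining the two bounds gives \(\capact{} \cdot |S| < \capact{} \cdot |N(S)|\), hence \(|N(S)| \geq |S| + 1\), which verifies Hall's condition and finishes the proof.

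The only delicate point is making sure that condition~\eqref{as:full} actually applies to both components of every pair, i.e.\ that the final segment~\(w_\ell\) (for which the \(>Q\) inequality need not hold) is never the first element of a pair; this is exactly why pairings are formed as \((w_{2i-1}, w_{2i})\) with \(i \leq \lfloor \ell / 2 \rfloor\), so that \(2i - 1 < \ell\) is automatic. Everything else is a routine application of Hall's theorem together with the demand-accounting properties \eqref{as:partition} and \eqref{as:full} of feasible splittings.
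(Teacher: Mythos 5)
Your proof is correct and follows essentially the same route as the paper's: the same bipartite graph between \(\Wpair\) and \(\W^*\), Hall's theorem, and the same demand-counting argument (pairs serve demand exceeding \(\capact{}\) by \cref{def:admsplit}\eqref{as:full}, walks of \(\W^*\) serve at most \(\capact{}\), and disjointness via \cref{def:admsplit}\eqref{as:partition} prevents double counting). One trivial slip: the fact that each arc of \(D_S\) is served by exactly one walk of \(\W^*\) follows from the feasibility of \((\W^*,s^*)\) as an \MWCARP{} solution, not from \cref{def:admsplit}\eqref{as:partition}, which concerns the splitting \((\W,s)\).
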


\begin{proof}
  Define an undirected bipartite graph~$B$ with the partite
  sets~$\Wpair{}$ and~$\W^*$. A pair~$(w_i, w_{i+1}) \in \Wpair{}$
  and a closed walk~$w^* \in \W^*$ are adjacent in~$B$ if
  $(s(w_i) \cup s(w_{i+1})) \cap s^*(w^*)\ne\emptyset$.  We prove that
  $B$~allows for a matching that matches each vertex of~$\Wpair{}$ to
  some vertex in~$\W^*$.
  To this end, by Hall's theorem \citep{Hal35},
  it suffices to prove that, for each subset~$S \subseteq \Wpair{}$,
  it holds that $|N_B(S)| \geq |S|$,
  where $N_B(S):=\bigcup_{v\in S}N_B(v)$
  and $N_B(v)$~is the set of neighbors of a vertex~$v$ in~$B$.
  Observe that,
  by \cref{def:admsplit}\eqref{as:full} of feasible splittings,
  for each pair~$(w_i, w_{i+1}) \in \Wpair{}$, we have
  $d(s(w_i) \cup s(w_{i+1})) \geq \capact{}$.
  Since the pairs serve pairwise disjoint sets of demand arcs
  by \cref{def:admsplit}\eqref{as:partition},
  the pairs in~$S$ serve a total demand of at
  least~$\capact{}\cdot |S|$ in the closed walks~$N_B(S)\subseteq\W^*$.
  Since each closed walk in~$N_B(S)$ serves demand at most~$\capact{}$,
  the set~$N_B(S)$ is at least as large as~$S$, as required.
\end{proof}

\noindent
In the following, we fix
an arbitrary arc in~$(s(w_i) \cup s(w_{i+1})) \cap s^*(w^*)$
for each pair~\((w_i,w_{i+1})\in\Wpair{}\)
and call it the \emph{\reprarc{} arc} of~$(w_i,w_{i+1})$.
Informally, the auxiliary walks~$A(w_i)$, $Z(w_i)$ mentioned before are constructed as follows for each walk~$w_i$. To get from the endpoint of~$w_i$
to~$v_0$, walk along the closed walk~$T$ until traversing the first
\reprarc{} arc~$a$, then from the head of~$a$ to~$v_0$ 
follow the tour of~\(W^*\) containing~$a$.
To get
from~$v_0$ to~$w_i$, take the symmetric approach: walk backwards on~$T$
from the start point of~$w_i$ until traversing a \reprarc{} arc~\(a\)
and then follow the tour of~$\W^*$ containing~$a$.
The formal definition of the auxiliary walks~$\anf(w)$ and~$\nde(w)$
is given below and illustrated in \cref{fig:auxpaths}. %

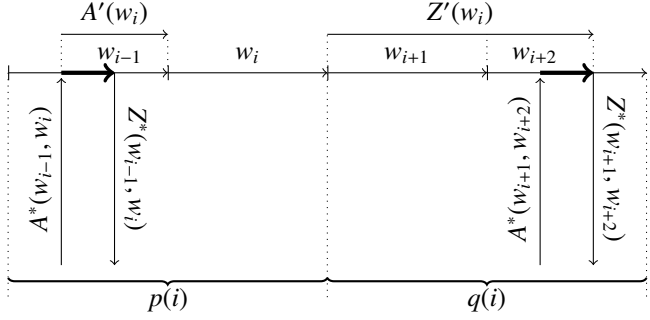
\begin{figure}
  \centering
  \begin{tikzpicture}[x=0.7cm,y=0.85cm]
    \draw[{|->}] (0,0) -- (3,0) node[above,pos=0.7] {$w_{i-1}$};
    \draw[{|->}] (3,0) -- (6,0) node[above,pos=0.5] {$w_{i}$};
    \draw[{|->}] (6,0) -- (9,0) node[above,pos=0.5] {$w_{i+1}$};
    \draw[{|->}] (9,0) -- (12,0) node[above,pos=0.3] {$w_{i+2}$};

    \draw[ultra thick, ->] (1,0)--(2,0);
    \draw[ultra thick, ->] (10,0)--(11,0);

    \draw [decorate,decoration={brace},thick] (6,-3.2)--(0,-3.2) node [midway,below] {$p(i)$};
    \draw [decorate,decoration={brace},thick] (12,-3.2)--(6,-3.2) node [midway,below] {$q(i)$};

    \draw[->, shorten >= 2pt] (1,-3) -- (1,0) node [midway,above,rotate=90] {$\anf^*(w_{i-1},w_{i})$};

    \draw[->, shorten >= 2pt] (10,-3) -- (10,0) node [midway,above,rotate=90] {$\anf^*(w_{i+1},w_{i+2})$};

    \draw[->] (2,0) -- (2,-3) node [midway,above,rotate=-90] {$\nde^*(w_{i-1},w_{i})$};

    \draw[->] (11,0) -- (11,-3) node [midway,above,rotate=-90] {$\nde^*(w_{i+1},w_{i+2})$};

    \draw[->] (1,0.6) -- (3,0.6) node [midway,above] {$\anf'(w_{i})$};

    \draw[->] (6,0.6) -- (11,0.6) node [midway,above] {$\nde'(w_i)$};

    \draw[dotted] (0,0)--(0,-3.5);
    \draw[dotted] (1,0)--(1,0.6);
    \draw[dotted] (1,0.6)--(1,0.6);
    \draw[dotted] (3,0)--(3,0.6);
    \draw[dotted] (6,-3.5)--(6,0.6);
    \draw[dotted] (9,0)--(9,0.6);
    \draw[dotted] (11,0)--(11,0.6);
    \draw[dotted] (12,0)--(12,-3.5);
  \end{tikzpicture}
  \caption{Illustration of \cref{def:auxpaths}.  Dotted lines are
    ancillary lines.  Thin arrows are walks. The braces along the
    bottom show a consecutive pairing of
    walks~$w_{i-1},\dots,w_{i+2}$.  Bold
    arcs are \reprarc{} arcs. Here, $p(i)$
    is exactly the pair that contains~$w_i$ and~$q(i)$ is the next
    pair.}
  \label{fig:auxpaths}
\end{figure}

\begin{definition}[Auxiliary walks]\label[definition]{def:auxpaths}
  Let $I=(G,v_0,c,d,\capact{})$~be an \MWCARP{} instance, $(\W^*,s^*)$ be an optimal solution, and $\Wpair{}$ be a consecutive pairing of some feasible splitting~$(\W,s)$ of a closed walk~$T$ containing all arcs~$R_d$ and~$v_0$, where $\W{} = (w_1, \ldots, w_\ell)$.

   Let $\phi \colon \Wpair{} \to \W^*$ be an injective map as in \cref{lem:assignment} and, for each pair~$(w_i, w_{i+1}) \in \Wpair$, let
  \begin{itemize}[$\anf^*(w_i,w_{i+1})$]
  \item[$\anf^*(w_i,w_{i+1})$] be a subwalk of~$\phi(w_i,w_{i+1})$ from~$v_0$ to the tail of the \reprarc{}  arc %
    of $(w_i,w_{i+1})$,
  \item[$\nde^*(w_i,w_{i+1})$] be a subwalk of~$\phi(w_i,w_{i+1})$ from the head of the \reprarc{} arc %
    of $(w_i,w_{i+1})$ to~$v_0$.
  \end{itemize}

\noindent  For each walk~$w_i\in\W$ with $i\geq 3$ (that is, $w_i$ is not in the first pair of~$\Wpair{}$), let
  \begin{itemize}[$\anf'(w_i)$]
  \item[$\jib{}$] be the index of the pair whose \reprarc{} arc is traversed first when walking~$T$ backwards starting from the starting point of~$w_i$,
  \item[$\anf'(w_i)$] be the subwalk of~$T$ starting at the end point of $\anf^*(w_{2\jib{}-1}, w_{2\jib{}})$ %
and ending at the start point of~$w_i$, and
  \item[$\anf(w_i)$] be the walk from~$v_0$ to the start point of~$w_i$ following first $\anf^*(w_{2\jib{}-1}, w_{2\jib{}})$ and then~$\anf'(w_i)$.
  \end{itemize}
\noindent  For each walk~$w_i\in\W$ with $i \leq \ell-3$ (that is, $w_i$ is not in the last pair of~$\Wpair{}$, where $w_\ell$ might not be in any pair if~$\ell$ is odd), let
  \begin{itemize}[$\anf'(w_i)$]
  \item[$\jia{}$] be the index of the pair whose \reprarc{} arc is traversed first when following~$T$ starting from the end point of~$w_i$,
  \item[$\nde'(w_i)$] be the subwalk of~$T$ starting at the end point of~$w_i$ and ending at the start point of~$\nde^*(w_{2\jia{}-1}, w_{2\jia{}})$, and
  \item[$\nde(w_i)$] be the walk from the end point of~$w_i$ to~$v_0$ following first~$\nde'(w_i)$ and then $\nde^*(w_{2\jia{}-1}, w_{2\jia{}})$.
  \end{itemize}
\end{definition}

\noindent We are now ready to prove \cref{lem:carp}, which also concludes our proof of \cref{ourthm}.

\begin{figure*}
  \centering
  \begin{tikzpicture}[x=0.75cm,y=0.85cm]
    \draw[{|->}] (-6,0) -- (-3,0) node[above,pos=0.6] {$w_{i-2}$};
    \draw[{|->}] (-3,0) -- (0,0) node[above,pos=0.5] {$w_{i-1}$};
    \draw[{|->}] (0,0) -- (3,0) node[above,pos=0.5] {$w_i$};
    \draw[{|->}] (3,0) -- (6,0) node[above,pos=0.5] {$w_{i+1}$};
    \draw[{|->}] (6,0) -- (9,0) node[above,pos=0.5] {$w_{i+2}$};
    \draw[{|->}] (9,0) -- (12,0) node[above,pos=0.4] {$w_{i+3}$};

    \draw[ultra thick, ->] (1,0)--(2,0);
    \draw[ultra thick, ->] (-5,0)--(-4,0);
    \draw[ultra thick, ->] (10,0)--(11,0);

    \draw [decorate,decoration={brace},thick] (0,-3.2)--(-6,-3.2) node [midway,below] {$p(i)$};
    \draw [decorate,decoration={brace},thick] (6,-3.2)--(0,-3.2);
    \draw [decorate,decoration={brace},thick] (12,-3.2)--(6,-3.2) node [midway,below] {$q(i)$};

    \draw[->, shorten >= 2pt] (-5,-3) -- (-5,0) node [midway,above,rotate=90] {$\anf^*(w_{i-2},w_{i-1})$};

    \draw[->, shorten >= 2pt] (1,-3) -- (1,0) node [midway,above,rotate=90] {$\anf^*(w_{i},w_{i+1})$};

    \draw[->, shorten >= 2pt] (10,-3) -- (10,0) node [midway,above,rotate=90] {$\anf^*(w_{i+2},w_{i+3})$};

    \draw[->] (-4,0) -- (-4,-3) node [midway,above,rotate=-90] {$\nde^*(w_{i-2},w_{i-1})$};

    \draw[->] (2,0) -- (2,-3) node [midway,above,rotate=-90] {$\nde^*(w_{i},w_{i+1})$};

    \draw[->] (11,0) -- (11,-3) node [midway,above,rotate=-90] {$\nde^*(w_{i+2},w_{i+3})$};

    \draw[->] (-5,0.6) -- (-0.05,0.6) node [pos=0.6,above] {$\anf'(w_i)$};
    \draw[->] (1,1.7) -- (3,1.7) node [midway,above] {$\anf'(w_{i+1})$};
    \draw[->] (1,2.4) -- (6,2.4) node [midway,above] {$\anf'(w_{i+2})$};
    \draw[->] (1,1.5) -- (9,1.5) node [midway,above] {$\anf'(w_{i+3})$};

    \draw[->] (-3,1.3) -- (2,1.3)  node [midway,above] {$\nde'(w_{i-2})$};

    \draw[->] (0,0.6) -- (2,0.6)  node [midway,above] {$\nde'(w_{i-1})$};

    \draw[->] (3,0.6) -- (11,0.6) node [midway,above] {$\nde'(w_i)$};

    \draw[dotted] (0,0.6)--(0,-3.5);
    \draw[dotted] (-5,0.6)--(-5,0);
    \draw[dotted] (-3,1.2)--(-3,0);
    \draw[dotted] (1,0)--(1,0.8);
    \draw[dotted] (1,1.1)--(1,2.4);
    \draw[dotted] (2,0)--(2,1.3);
    \draw[dotted] (3,0)--(3,1.6);
    \draw[dotted] (6,-3.5)--(6,2.4);
    \draw[dotted] (9,0)--(9,1.5);
    \draw[dotted] (11,0)--(11,0.6);
    \draw[dotted] (-6,0)--(-6,-3.5);
    \draw[dotted] (12,0)--(12,-3.5);
  \end{tikzpicture}
  \caption{Illustration of the situation in which a maximum number of five different walks in~$\W$ traverse the same \reprarc{} arc (the bold arc of~$w_i$) in their respective auxiliary walks.  }
  \label{fig:auxpaths2}
\end{figure*}
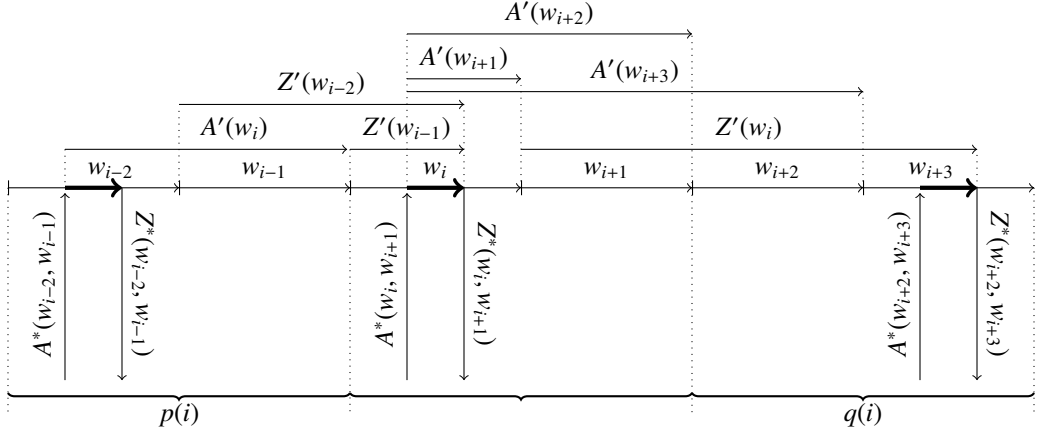

\begin{proof}[Proof of \cref{lem:carp}]
  Let $I = (G, v_0, c, d, \capact{})$ be an \MWRPP{} instance and~$(\W^*,s^*)$ be an optimal solution.  If there is no demand arc~$(v_0,v_0)$ in~$I$, then we add it with zero cost in order to make \cref{alg:carp} applicable.  This clearly does not change the cost of an optimal solution but may increase the number of connected components of~$G[R_d]$ to~$\comps+1$.

  In \cref{alg:carp:basetour,alg:carp:split}, \cref{alg:carp}  computes a tour~$T$ and its \admsplit{}~$(\W,s)$,
  which works in $t(C+1)+O(n^3)$~time by \cref{ourthm}\eqref{ourthm2}.
  Denote $\W=(w_1,\dots,w_\ell)$. The solution returned by \cref{alg:carp} consists, for each~$1\leq i\leq\ell$, of a tour starting in~$v_0$, following a shortest path to the starting point of~$w_i$, then~$w_i$, and a shortest path back to~$v_0$.

  For $i\geq 3$, the shortest path from~$v_0$ to the starting point of~$w_i$ has length at most~$\cost(\anf(w_i))$. For $i\leq \ell-3$, the shortest path from the end point of~$w_i$ to~$v_0$ has length at most~$\cost(\nde(w_i))$. This amounts to $\sum_{i=3}^{\ell}\cost(\anf(w_i))+\sum_{i=1}^{\ell-3}\cost(\nde(w_i))$. To bound the costs of the shortest paths attached to~$w_i$ for $i\in\{1,2,\ell-2,\ell-1,\ell\}$, observe the following. For each~$i\in\{1,2\}$, the shortest paths from~$v_0$ to the start point of~$w_i$ and from the end point of~$w_{\ell-i}$ to~$v_0$ together have length at most~$\cost(T)$. The shortest path from the end point of~$w_{\ell}$ to~$v_0$ has length at most~$\cost(T)-\cost(\W{})$.
  Thus, the solution returned by \cref{alg:carp} has cost at most%
  \begin{align*}\allowdisplaybreaks
    &\sum_{i=1}^\ell \cost(w_i)+\sum_{i=3}^{\ell}\cost(\anf(w_i))+\sum_{i=1}^{\ell-3}\cost(\nde(w_i))+3\cost(T)-\cost(\W{})\\
={}&\sum_{i=3}^{\ell}\cost(\anf(w_i))+\sum_{i=1}^{\ell-3}\cost(\nde(w_i))+3\cost(T)\\={}&3\cost(T)+{}\\
    &+\sum_{i=3}^{\ell}\cost(\anf^*(w_{2\jib{}-1}, w_{2\jib{}}))+\sum_{i=1}^{\ell-3}\cost(\nde^*(w_{2\jia{}-1}, w_{2\jia{}}))+{}\tag{S1}\label{s2}\\
    &+\sum_{i=3}^{\ell}\cost(\anf'(w_i))+\sum_{i=1}^{\ell-3}\cost(\nde'(w_i)).\tag{S2}\label{s1}
  \end{align*}
 Observe that, for a fixed~$i$, one has $p(i)=p(j)$ only for~$j\leq i+2$ and $q(i)=q(j)$ only for~$j\geq i-2$.
Moreover, by \cref{lem:assignment} and \cref{def:auxpaths},
if $p(i)\neq p(j)$, then $\anf^*(w_{2p(i)-1}, w_{2p(i)})$ and $\anf^*(w_{2p(j)-1}, w_{2p(j)})$ are subwalks of distinct walks of~$\W^*$. Similarly, $\nde^*(w_{2q(i)-1}, w_{2q(i)})$ and $\nde^*(w_{2q(j)-1}, w_{2q(j)})$ are subwalks of distinct walks of~$\W^*$ if~$q(i)\neq q(j)$.  Hence, sum~\eqref{s2} counts each arc of~$\W^*$ at most three times and is therefore bounded from above by~$3\cost(\W^*)$.

Now, for a walk~$w_i$, let $\mathcal{A}_i$~be the set of walks~$w_j$
such that any arc $a$ of~$w_i$~is contained in~$\anf'(w_j)$ and let
$\mathcal{Z}_i$~be the set of walks such that any arc~$a$ of~$w_i$~is
contained in~$\nde'(w_j)$. Observe that~$\anf'(w_j)$ and~$\nde'(w_j)$
cannot completely contain two walks of the same pair of the consecutive
pairing~$\Wpair{}$ of~$\W$ since, by \cref{lem:assignment}, each pair
has a \reprarc{} arc and~$\anf'(w_j)$ and~$\nde'(w_j)$ both stop after
traversing a \reprarc{} arc. Hence, the walks
in~$\mathcal{A}_i\cup \mathcal{Z}_i$ can be from at most three pairs
of~$\Wpair{}$: the pair containing~$w_i$ and the two neighboring
pairs. Finally, observe that~$w_i$ itself is not contained
in~$\mathcal{A}_i\cup \mathcal{Z}_i$. Thus,
$\mathcal{A}_i\cup \mathcal{Z}_i$ contains at most five walks
(\cref{fig:auxpaths2} shows a worst-case example).  Therefore, sum~\eqref{s1}
counts every arc of~$\W{}$ at most five times and is bounded
from above by~$5\cost(\W{})$.

Thus, \cref{alg:carp} returns a solution of cost $3\cost(T)+5\cost(\W{})+3\cost(\W^*)$ which, by \cref{lem:shortsplitting}, is at most
$8\cost(T)+3\cost(\W^*)\leq 8\beta(\comps+1)\cost(\W^*)+3\cost(\W^*)\leq (8\beta(\comps+1)+3)\cost(\W^*)$.
\end{proof}

\section{Experiments}
\label{sec:experiments}

Our approximation algorithm for MWCARP is one of many
``route first, cluster second''-approaches,
which was first applied to CARP by \citet{Ulu85}
and led to constant-factor approximations for 
the undirected CARP \citep{Woe08,Jan93}.
Notably, 
\citet{BBLP06} implemented \citeauthor{Ulu85}'s heuristic \citep{Ulu85}
for the mixed CARP
by computing the base tour using path scanning heuristics.
Our experimental evaluation will show that
\citeauthor{Ulu85}'s heuristic can be substantially improved
by computing the base tour
using our \cref{ourthm}\eqref{ourthm2}.

For the evaluation,
we use the \texttt{mval} and \texttt{lpr} benchmark sets
of \citet{BBLP06}
for the mixed (but non-windy) CARP
and the \texttt{egl-large} benchmark set
of \citet{BE08} for the (undirected) CARP.
We chose these benchmark sets because relatively good lower bounds
to compare with are known \citep{GMP10,BI15}.
Moreover,
the \texttt{egl-large} set is of particular interest
since it contains large instances derived from real road networks
and the \texttt{mval} and \texttt{lpr} sets
are of particular interest
since \citet{BBLP06} used them to evaluate
their variant of \citeauthor{Ulu85}'s heuristic \citep{Ulu85},
which is very similar to our algorithm.

In the following, \cref{sec:impl} describes some
heuristic enhancements of our algorithm,
\cref{sec:expres} interprets our experimental results,
and \cref{sec:ob} describes an approach to transform
instances of existing benchmark sets
into instances whose positive-demand arcs induce a moderate number
of connected components.

\subsection{Implementation details}
\label{sec:impl}

Since our main goal is evaluating the solution quality
rather than the running time of our algorithm,
we sacrificed speed for simplicity
and implemented it in Python.\footnote{Source code
available at \url{http://gitlab.com/rvb/mwcarp-approx}}
Thus, the running time of our implementation is not competitive
to the implementations by \citet{BBLP06} and \citet{BE08}.%
\footnote{We do not provide running time measurements
since we processed many instances in parallel,
which does not yield reliable measurements.}
However, it is clear that a careful implementation
of our algorithm in C++
will yield competitive running times:
The most expensive steps of our algorithm
are the Floyd-Warshall all-pair shortest path algorithm \citep{Flo62},
which is also used by \citet{BBLP06} and \citet{BE08},
and the computation of an uncapacitated minimum-cost flow,
algorithms for which are contained
in highly optimized C++ libraries
like LEMON.\footnote{\url{http://lemon.cs.elte.hu/}}

In the following, we describe heuristic improvements
over the algorithms presented in \cref{sec:rpp,sec:carp},
which were described there so as
to conveniently prove upper bounds
rather than focusing on good solutions.

\subsubsection{Joining connected components}
\label{sec:connectbf}
We observed that,
in all but one instance of the \texttt{egl-large}, \texttt{lpr},
and \texttt{mval} benchmark sets,
the set of positive-demand arcs induce only one connected component.
Therefore, connecting them is usually not necessary and
the call to \cref{alg:eulerian-rp} in \cref{alg:drpp}
can be skipped completely.
If not, then, contrary to the description of \cref{alg:eulerian-rp},
we do \emph{not} arbitrarily select one vertex
from each connected component and join them using an approximate
\TATSP{} tour as in \cref{alg:eulerian-rp} or using an optimal
\TATSP{} tour as for \cref{cor:fpt-apx}.

Instead, using brute force,
we try all possibilities of choosing one vertex
from each connected component and connecting them using a cycle
and choose the cheapest variant.
If the positive-demand arcs induce \(C\)~connected components,
then this takes \(O(n^C\cdot C!+n^3)\)~time in an \(n\)-vertex graph.
That is, for \(C\leq 3\), implementing \cref{alg:eulerian-rp}
in this way does not increase its asymptotic time complexity.

\subsubsection{Choosing service direction}
\label{sec:serdir}
The instances in the \texttt{egl-large}, \texttt{lpr}, and \texttt{mval}
benchmark sets are not windy.
Thus, as pointed out in \cref{rem:orientation},
when computing the MWRPP base tour,
we are free to choose whether to replace
a required undirected edge~\(\{u,v\}\)
by a required arc~\((u,v)\) or
a required arc~\((v,u)\)
(and adding the opposite non-required arc)
without increasing the approximation factor
in \cref{ourthm}\eqref{ourthm2}.

We thus implemented several heuristics for choosing what we call
the \emph{service direction} of the undirected edge~\(\{u,v\}\).
Some of these heuristics choose the service direction independently
for each undirected edge, similarly to \citet{CMR00},
others choose it for whole undirected paths and cycles,
similarly to \citet{MA05}.

We now describe these heuristics in detail.  To this end, 
let \(G\)~denote our input graph and
\(R\)~be the set of required arcs.

\begin{itemize}[EO(R)]
\item[EO(R)] assigns one of the two possible service directions
  to each undirected edge uniformly at random.
\item[EO(P)] replaces each undirected edge~\(\{u,v\}\in R\)
by an arc~\((u,v)\in R\) if \(\balance_{G[R]}(v)<\balance_{G[R]}(u)\),
by an arc~\((v,u)\in R\) if \(\balance_{G[R]}(v)>\balance_{G[R]}(u)\),
and chooses a random service direction otherwise.
\item[EO(S)] randomly chooses one endpoint~\(v\)
  of each undirected edge~\(\{u,v\}\in R\) and replaces
  it by an arc~\((u,v)\in R\) if \(\balance_{G[R]}(v)<0\)
  and by~\((v,u)\in R\) otherwise.
\end{itemize}
Herein, ``EO'' is for ``edge orientation''.
The ``R'' in parentheses is for ``random'',
the ``P'' for ``pair''
(since it levels the balances of pairs of vertices),
and the ``S'' is for ``single''
(since it minimizes~\(|\balance(v)|\) of a single
random endpoint~\(v\) of the edge).

In addition, we experiment with three heuristics
that do not orient independent edges but long undirected paths.
Herein, the aim is that a vehicle will be able to
serve all arcs resulting from such a path
in one run.

First, the heuristics repeatedly search for undirected cycles in~\(G[R]\)
and replace them by directed cycles in~\(R\).
When no undirected cycle is left, then
the undirected edges of~\(G[R]\)~form a forest.
The heuristics then repeatedly search for a
longest undirected path in~\(G[R]\)
and choose its service direction as follows.

\begin{itemize}[PO(R)]
\item[PO(R)] assigns the service direction randomly.
\item[PO(P)] assigns the service direction by leveling the balance
of the endpoints of the path, analogously to EO(P).
\item[PO(S)] assigns the service direction so as to minimize
\(|\balance(v)|\) for a random endpoint~\(v\) of the path,
analogously to EO(S).
\end{itemize}
\noindent
Generally, we observed that these heuristics
first find three or four long paths
with lengths from~5 up to~15. 
Then, the length of the found paths quickly decreases:
In most instances,
at least half of all found paths have length one,
at least 3/4 of all found paths have length at most two.

We now present experimental results for each of these six heuristics.

\subsubsection{Tour splitting} 
\label{sec:optsplit}
As pointed out in \cref{rem:split},
the MWRPP base tour initially computed in \cref{alg:carp}
can be split into pairwise non-overlapping subsequences
so as to minimize the total cost of the resulting vehicle tours.
To this end, we apply an approach of
\citet{Bea83} and \citet{Ulu85},
which by now can be considered folklore \citep{BBLP06,Woe08,Jan93} and works as follows.

Denote the positive-demand arcs on the MWRPP base tour
as a sequence~\(a_1,\dots,a_\ell\).
To compute the optimal splitting,
we create an auxiliary graph with the vertices~\(1,\dots,\ell+1\).
Between each pair~\((i,j)\) of vertices,
there is an edge whose weight is the cost for serving
all arcs~\(a_i,a_{i+1},\dots,a_{j-1}\) in this order
using one vehicle.
That is, its cost is~\(\infty\) if the demands of the arcs
in this segment exceed the vehicle capacity~\(Q\) and
otherwise it is the cost
for going from the depot~\(v_0\) to the tail of~\(a_i\),
serving arcs~\(a_i\) to~\(a_{j-1}\),
and returning from the head of~\(a_{j-1}\) to the depot~\(v\).
Then, a shortest path from vertex~\(1\) to~\(\ell+1\)
in this auxiliary graph
gives an optimal splitting of the MWRPP base tour
into mutually non-overlapping subsequences.

Additionally, we implemented a trick of \citet{BBLP06}
that takes into account that a vehicle may serve
a segment~\(a_i,\dots,a_k,a_{k+1}\dots,a_{j-1}\)
by going to the tail of~\(a_{k+1}\),
serving arcs~\(a_{k+1}\) to~\(a_{j-1}\),
going from the head of~\(a_{j-1}\) to the tail of~\(a_i\),
serving arcs~\(a_i\) to~\(a_k\),
and finally returning from the head of~\(a_k\) to the depot~\(v_0\).
Our implementation tries all such~\(k\)
and assigns the cheapest resulting cost
to the edge between the pair~\((i,j)\) of vertices
in the auxiliary graph.

Of course one could compute the optimal order
for serving the arcs of a segment~\(a_i,\dots,a_{j-1}\)
from the depot~\(v_0\),
but this would again be the NP-hard DRPP.

\subsection{Experimental results}
\label{sec:expres}

Our experimental results
for the \texttt{lpr}, \texttt{mval}, and \texttt{egl-large} instances
are presented in \cref{tab:lpr,tab:egl}.
We grouped the results for the \texttt{lpr} and \texttt{mval} instances
into one table and subsection
since our conclusions about them are very similar.
We explain and interpret the tables in the following.

\begin{table*}[p]
  \caption{Known results \citep{BBLP06,GMP10} and our results
    for the \texttt{lpr} and \texttt{mval} instances.
  See \cref{sec:lpr-mval} for a description of the table.  The best polynomial-time computed upper bound
is written in boldface, the second best is underlined,
names of instances solved optimally by our algorithms
are also written in boldface.}
  \label{tab:lpr}
  \centering
  \begin{tabular}{l|rr|rrr|rrrrrr}
\toprule
& \multicolumn{5}{c|}{Known results} & \multicolumn{6}{c}{Our results}   \\
Instance           & LB       & UB       & PSRC     & IM                 & IURL             & EO(R)             & EO(P)             & EO(S)             & PO(R)             & PO(P)             & PO(S)             \\
\midrule
\bestsol{lpr-a-01} & 13\,484  & 13\,484  & 13\,600  & 13\,597            & \secsol{13\,537} & \bestsol{13\,484} & \bestsol{13\,484} & \bestsol{13\,484} & \bestsol{13\,484} & \bestsol{13\,484} & \bestsol{13\,484} \\
lpr-a-02           & 28\,052  & 28\,052  & 29\,094  & 28\,377            & 28\,586          & \bestsol{28\,225} & 28\,381           & 28\,356           & \secsol{28\,239}  & 28\,381           & 28\,356           \\
lpr-a-03           & 76\,115  & 76\,155  & 79\,083  & 77\,331            & 78\,151          & 77\,019           & \bestsol{76\,783} & 76\,964           & 76\,951           & \bestsol{76\,783} & \secsol{76\,820}  \\
lpr-a-04           & 126\,946 & 127\,352 & 133\,055 & \bestsol{128\,566} & 131\,884         & 130\,470          & \secsol{130\,137} & 130\,255          & 130\,198          & 130\,171          & 130\,186          \\
lpr-a-05           & 202\,736 & 205\,499 & 215\,153 & \bestsol{207\,597} & 212\,167         & 210\,328          & \secsol{209\,980} & 210\,265          & 210\,235          & 210\,139          & 210\,344          \\
\bestsol{lpr-b-01} & 14\,835  & 14\,835  & 15\,047  & 14\,918            & \secsol{14\,868} & 14\,869           & 14\,869           & \bestsol{14\,835} & \bestsol{14\,835} & \bestsol{14\,835} & \bestsol{14\,835} \\
lpr-b-02           & 28\,654  & 28\,654  & 29\,522  & 29\,285            & 28\,947          & 28\,749           & \bestsol{28\,689} & \bestsol{28\,689} & 28\,757           & 28\,790           & \secsol{28\,727}  \\
lpr-b-03           & 77\,859  & 77\,878  & 80\,017  & 80\,591            & 79\,910          & \bestsol{78\,428} & 78\,745           & 78\,853           & \secsol{78\,645}  & 78\,810           & 78\,743           \\
lpr-b-04           & 126\,932 & 127\,454 & 133\,954 & \bestsol{129\,449} & 132\,241         & \secsol{130\,024} & \secsol{130\,024} & \secsol{130\,024} & 130\,076          & \secsol{130\,024} & \secsol{130\,024} \\
lpr-b-05           & 209\,791 & 211\,771 & 223\,473 & \bestsol{215\,883} & 219\,702         & 217\,024          & 216\,769          & \secsol{216\,459} & 217\,079          & 216\,639          & 216\,659          \\
lpr-c-01           & 18\,639  & 18\,639  & 18\,897  & 18\,744            & \secsol{18\,706} & 18\,943           & \bestsol{18\,695} & 18\,732           & 18\,708           & 18\,752           & 18\,752           \\
lpr-c-02           & 36\,339  & 36\,339  & 36\,929  & \bestsol{36\,485}  & 36\,763          & 37\,177           & \secsol{36\,649}  & 36\,856           & 36\,723           & 36\,711           & 36\,662           \\
lpr-c-03           & 111\,117 & 111\,632 & 115\,763 & \bestsol{112\,462} & 114\,539         & 115\,399          & 114\,438          & 114\,888          & 114\,336          & 114\,335          & \secsol{114\,290} \\
lpr-c-04           & 168\,441 & 169\,254 & 174\,416 & \bestsol{171\,823} & 173\,161         & 174\,088          & \secsol{172\,089} & 172\,902          & 172\,637          & 172\,172          & 172\,365          \\
lpr-c-05           & 257\,890 & 259\,937 & 268\,368 & \bestsol{262\,089} & 266\,058         & 266\,637          & \secsol{263\,989} & 264\,947          & 264\,911          & 264\,263          & 264\,665          \\
\midrule
Instance           & LB       & UB       & PSRC     & IM                 & IURL             & EO(R)             & EO(P)             & EO(S)             & PO(R)             & PO(P)             & PO(S)             \\
\midrule
\bestsol{mval1A} & 230 & 230 & 243  & 243           & \secsol{231}  & 245   & \bestsol{230} & 238           & 234           & 239           & 234           \\
mval1B           & 261 & 261 & 314  & \bestsol{276} & 292           & 298   & \secsol{285}  & \secsol{285}  & 307           & 307           & 307           \\
mval1C           & 309 & 315 & 427  & \bestsol{352} & \secsol{357}  & 367   & 362           & 362           & 367           & 372           & 370           \\
\bestsol{mval2A} & 324 & 324 & 409  & 360           & 374           & 397   & \secsol{353}  & \bestsol{324} & 369           & 369           & 368           \\
mval2B           & 395 & 395 & 471  & \bestsol{407} & 434           & 431   & \secsol{424}  & \secsol{424}  & \secsol{424}  & \secsol{424}  & \secsol{424}  \\
mval2C           & 521 & 526 & 644  & \bestsol{560} & 601           & 621   & 622           & 592           & \secsol{600}  & 624           & 594           \\
mval3A           & 115 & 115 & 133  & \bestsol{119} & 128           & 131   & 129           & 125           & 122           & \secsol{121}  & \secsol{121}  \\
mval3B           & 142 & 142 & 162  & 163           & 150           & 151   & \secsol{148}  & 151           & 149           & \bestsol{147} & \bestsol{147} \\
mval3C           & 166 & 166 & 191  & \bestsol{174} & 192           & 194   & 190           & \secsol{189}  & 194           & 200           & 200           \\
mval4A           & 580 & 580 & 699  & 653           & 684           & 648   & \bestsol{622} & \secsol{645}  & 651           & 647           & 647           \\
mval4B           & 650 & 650 & 775  & 693           & 737           & 709   & 687           & 709           & 690           & \bestsol{674} & \secsol{682}  \\
mval4C           & 630 & 630 & 828  & \bestsol{702} & 740           & 750   & 721           & 736           & \secsol{714}  & 722           & 722           \\
mval4D           & 746 & 770 & 1015 & \bestsol{810} & 905           & 875   & 871           & \secsol{852}  & 872           & 879           & 870           \\
mval5A           & 597 & 597 & 733  & 686           & 683           & 672   & \secsol{619}  & 652           & \bestsol{614} & 649           & 644           \\
mval5B           & 613 & 613 & 718  & 677           & 677           & 687   & 662           & 685           & \bestsol{653} & \bestsol{653} & \secsol{654}  \\
mval5C           & 697 & 697 & 809  & \bestsol{743} & 811           & 788   & \secsol{773}  & 778           & 783           & 804           & 783           \\
mval5D           & 719 & 739 & 883  & \bestsol{821} & 855           & 859   & 840           & 854           & 845           & 840           & \secsol{836}  \\
mval6A           & 326 & 326 & 392  & 370           & 367           & 348   & \secsol{347}  & 348           & \bestsol{344} & 351           & 350           \\
mval6B           & 317 & 317 & 406  & 346           & 354           & 345   & \bestsol{331} & 354           & 351           & \secsol{343}  & 347           \\
mval6C           & 365 & 371 & 526  & \bestsol{402} & 444           & 455   & \secsol{435}  & \secsol{435}  & 461           & 454           & 454           \\
mval7A           & 364 & 364 & 439  & \bestsol{381} & 390           & 428   & \secsol{386}  & 411           & 404           & 398           & 398           \\
mval7B           & 412 & 412 & 507  & 470           & 491           & 474   & \bestsol{435} & 463           & 460           & 460           & \secsol{454}  \\
mval7C           & 424 & 426 & 578  & \bestsol{451} & 504           & 507   & 474           & 483           & 489           & \secsol{482}  & \secsol{482}  \\
mval8A           & 581 & 581 & 666  & 639           & 651           & 648   & \secsol{635}  & \secsol{635}  & 639           & \bestsol{627} & 641           \\
mval8B           & 531 & 531 & 619  & \bestsol{568} & 611           & 616   & \secsol{582}  & 592           & 596           & 598           & 600           \\
mval8C           & 617 & 638 & 842  & \bestsol{718} & 762           & 799   & 737           & \secsol{729}  & 776           & 764           & 779           \\
mval9A           & 458 & 458 & 529  & 500           & 514           & 503   & \bestsol{486} & 493           & 496           & \secsol{490}  & 498           \\
mval9B           & 453 & 453 & 552  & 534           & \bestsol{502} & 518   & 504           & \secsol{503}  & \secsol{503}  & 523           & 506           \\
mval9C           & 428 & 429 & 529  & 479           & 498           & 509   & \bestsol{468} & 488           & 485           & 479           & \secsol{474}  \\
mval9D           & 514 & 520 & 695  & \bestsol{575} & 622           & 627   & \secsol{603}  & 610           & 612           & 613           & 608           \\
mval10A          & 634 & 634 & 735  & 710           & 705           & 669   & 663           & 661           & 667           & \bestsol{658} & \secsol{659}  \\
mval10B          & 661 & 661 & 753  & 717           & 714           & 708   & \bestsol{687} & \secsol{693}  & 703           & 703           & 698           \\
mval10C          & 623 & 623 & 751  & \bestsol{680} & 714           & 709   & 689           & 697           & 698           & 695           & \secsol{687}  \\
mval10D          & 643 & 649 & 847  & \bestsol{706} & 760           & 778   & 739           & 763           & 775           & 743           & \secsol{722}  \\
\bottomrule
\end{tabular}
\end{table*}

\begin{table*}
  \centering
\caption{Known results \citep{BE08,BI15} and our results
    for the \texttt{egl-large} instances.
  See \cref{sec:egl} for a description of the table.  The best polynomial-time computed upper bound
is written in boldface.}
  \begin{tabular}{l|rrr|rrrrrr}
\toprule
& \multicolumn{3}{c|}{Known results} & \multicolumn{6}{c}{Our results}   \\
Instance & LB          & UB          & PS          & EO(R)       & EO(P)       & EO(S)       & PO(R)                 & PO(P)                 & PO(S)                             \\
\midrule
egl-g1-A & 976\,907    & 1\,049\,708 & 1\,318\,092 &  1\,258\,206&1\,181\,928&1\,209\,108&1\,153\,029&1\,158\,233&\bestsol{1\,141\,457}\\
egl-g1-B & 1\,093\,884 & 1\,140\,692 & 1\,483\,179 &  1\,367\,979&1\,306\,521&1\,328\,250&1\,293\,095&1\,308\,350&\bestsol{1\,297\,606}\\
egl-g1-C & 1\,212\,151 & 1\,282\,270 & 1\,584\,177 &  1\,523\,183&1\,456\,305&1\,463\,009&1\,432\,281&\bestsol{1\,424\,722}&1\,430\,841\\
egl-g1-D & 1\,341\,918 & 1\,420\,126 & 1\,744\,159 &  1\,684\,343&1\,609\,822&1\,609\,537&1\,586\,294&1\,601\,588&\bestsol{1\,580\,634}\\
egl-g1-E & 1\,482\,176 & 1\,583\,133 & 1\,841\,023 &  1\,829\,244&1\,769\,977&1\,780\,089&1\,716\,612&\bestsol{1\,748\,308}&1\,755\,700\\
egl-g2-A & 1\,069\,536 & 1\,129\,229 & 1\,416\,720 &  1\,372\,177&1\,276\,871&1\,304\,618&1\,263\,263&\bestsol{1\,249\,293}&1\,255\,120\\
egl-g2-B & 1\,185\,221 & 1\,255\,907 & 1\,559\,464 &  1\,517\,245&1\,410\,385&1\,449\,553&1\,398\,162&1\,405\,916&\bestsol{1\,404\,533}\\
egl-g2-C & 1\,311\,339 & 1\,418\,145 & 1\,704\,234 &  1\,661\,596&1\,594\,147&1\,597\,266&1\,538\,036&\bestsol{1\,532\,913}&1\,544\,214\\
egl-g2-D & 1\,446\,680 & 1\,516\,103 & 1\,918\,757 &  1\,812\,309&1\,728\,840&1\,741\,351&1\,695\,333&\bestsol{1\,694\,448}&1\,704\,080\\
egl-g2-E & 1\,581\,459 & 1\,701\,681 & 1\,998\,355 &  1\,962\,802&1\,883\,953&1\,908\,339&\bestsol{1\,851\,436}&1\,861\,134&1\,861\,469\\
\bottomrule
  \end{tabular}
  \label{tab:egl}
\end{table*}

\subsubsection{Results for the lpr and mval instances}
\label{sec:lpr-mval}
\cref{tab:lpr} presents known results and our results
for the \texttt{lpr} and \texttt{mval} instances.
Each column for our results was obtained by running
our algorithm with
the corresponding service direction heuristic described in \cref{sec:serdir}
on each instance 20~times and reporting the best result.
The number 20 has been chosen so that our results are comparable with
those of \citet{BBLP06}, who
used the same number of runs
for their path scanning heuristic (column PSRC)
and their ``route first, cluster second'' heuristic (column IURL),
which computes the base tour using a path scanning heuristic
and then splits it using all tricks described in \cref{sec:optsplit}.
Columns LB and UB report the best
lower and upper bounds computed by \citet{BBLP06} and \citet{GMP10}
(usually not using polynomial-time algorithms).
Finally, column IM shows the result that \citet{BBLP06} obtained
using an improved variant of the ``augment and merge'' heuristic
due to \citet{GW81}.

\cref{tab:lpr} shows
that our algorithm with the EO(S) service direction heuristic
solved three instances optimally,
which other polynomial-time heuristics did not.
The EO(P) heuristic solved one instance optimally,
which also other polynomial-time heuristics did not.
Moreover, whenever no variant of our algorithm finds the best result,
then some variant yields the second best.
It is outperformed only by IM in 26 out of 49 instances
and by IURL in only one instance.
Apparently, our algorithm outperforms PSRC and IURL.
Notably, IURL differs from our algorithm only
in computing the base tour heuristically
instead of using our \cref{ourthm}\eqref{ourthm2}.
Thus, ``route first, cluster second'' heuristics
seem to benefit from computing the base tour
using our MWRPP approximation algorithm.

Remarkably,
when our algorithm yields the best result
using one of
the service direction heuristics described in \cref{sec:serdir},
then usually other service direction heuristics also find
the best or at least the second best solution.
Thus, the choice of the service direction heuristic
does not play a strong role.  
Indeed, we also experimented with repeating our algorithm
20~times on each instance,
each time choosing the service direction heuristic randomly.
The results come close to choosing the best heuristic for each instance.

\subsubsection{Results for the egl-large instances}
\label{sec:egl}
\cref{tab:egl} reports known results and our results
for the \texttt{egl-large} benchmark set.
Again, each column for our results was obtained
by running our algorithm with the corresponding
service direction heuristic
described in \cref{sec:serdir} on each instance 20~times.
The column LB reports lower bounds by \citet{BI15},
the column UB shows the upper bound that \citet{BE08}
obtained using their tabu-search algorithm
(which generally does not run in polynomial time).
The column PS shows the cost of the initial solution that \citet{BE08}
computed for their tabu-search algorithm
using a path scanning heuristic.
\citet{BE08} implemented several polynomial-time heuristics
for computing these initial solution.
Among them, ``route first, cluster second'' approaches
and ``augment and merge'' heuristics.
In their work, path scanning yielded the best initial solutions.
In \cref{tab:egl}, we see that
our algorithm clearly outperforms it.
Moreover, we see that especially our PO service direction heuristics
are successful.
This is because the \texttt{egl-large} instances
are undirected and, thus,
contain many cycles consisting of undirected positive-demand
arcs that can be directed by our PO heuristics
without increasing the imbalance of vertices.

\begin{figure*}[tb]
  \begin{subfigure}[b]{0.4\textwidth}
    \includegraphics[width=\textwidth]{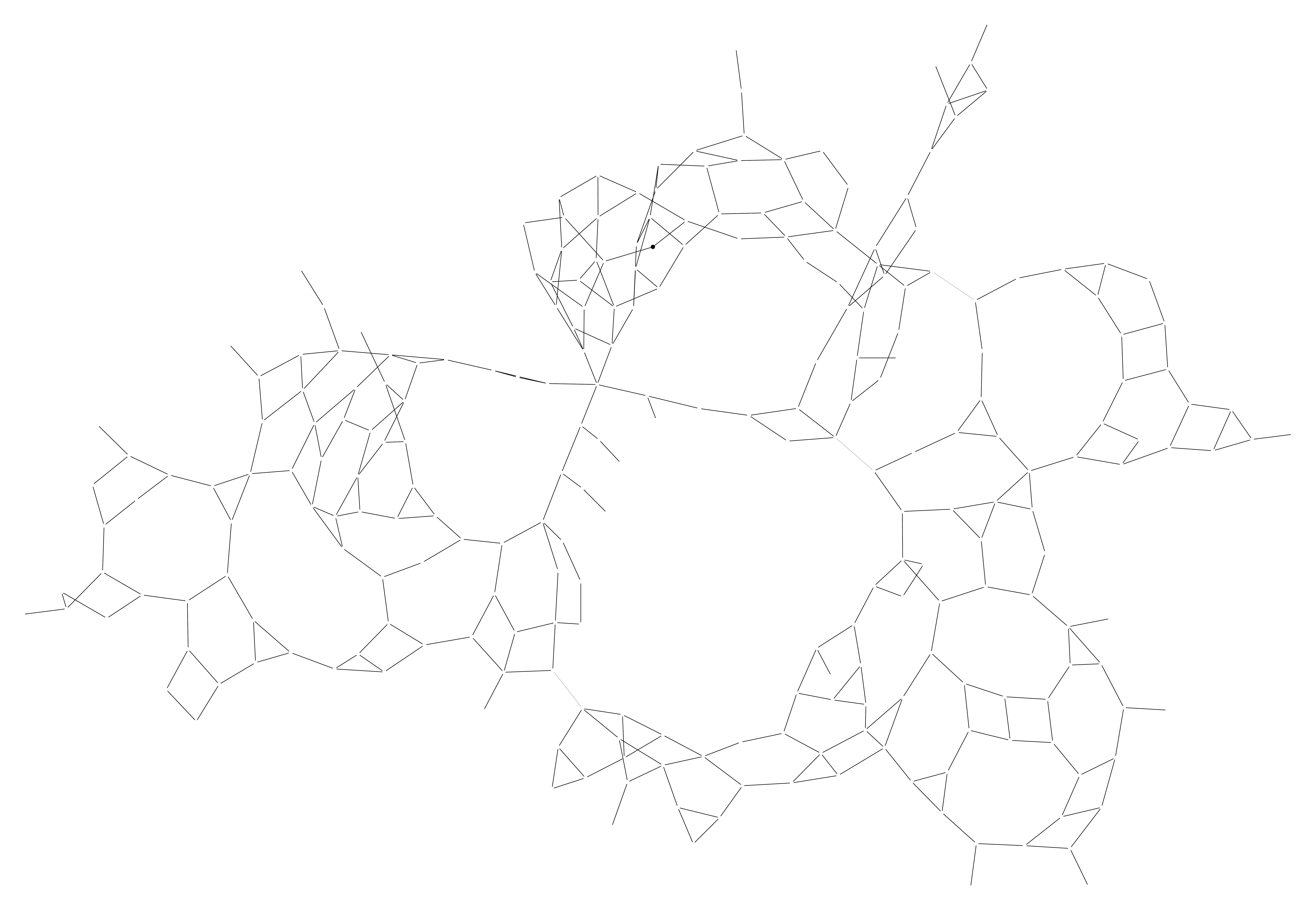}
    \caption{ob-egl-g2-E}
  \end{subfigure}
  \hfill
  \begin{subfigure}[b]{0.2\textwidth}
    \includegraphics[width=\textwidth]{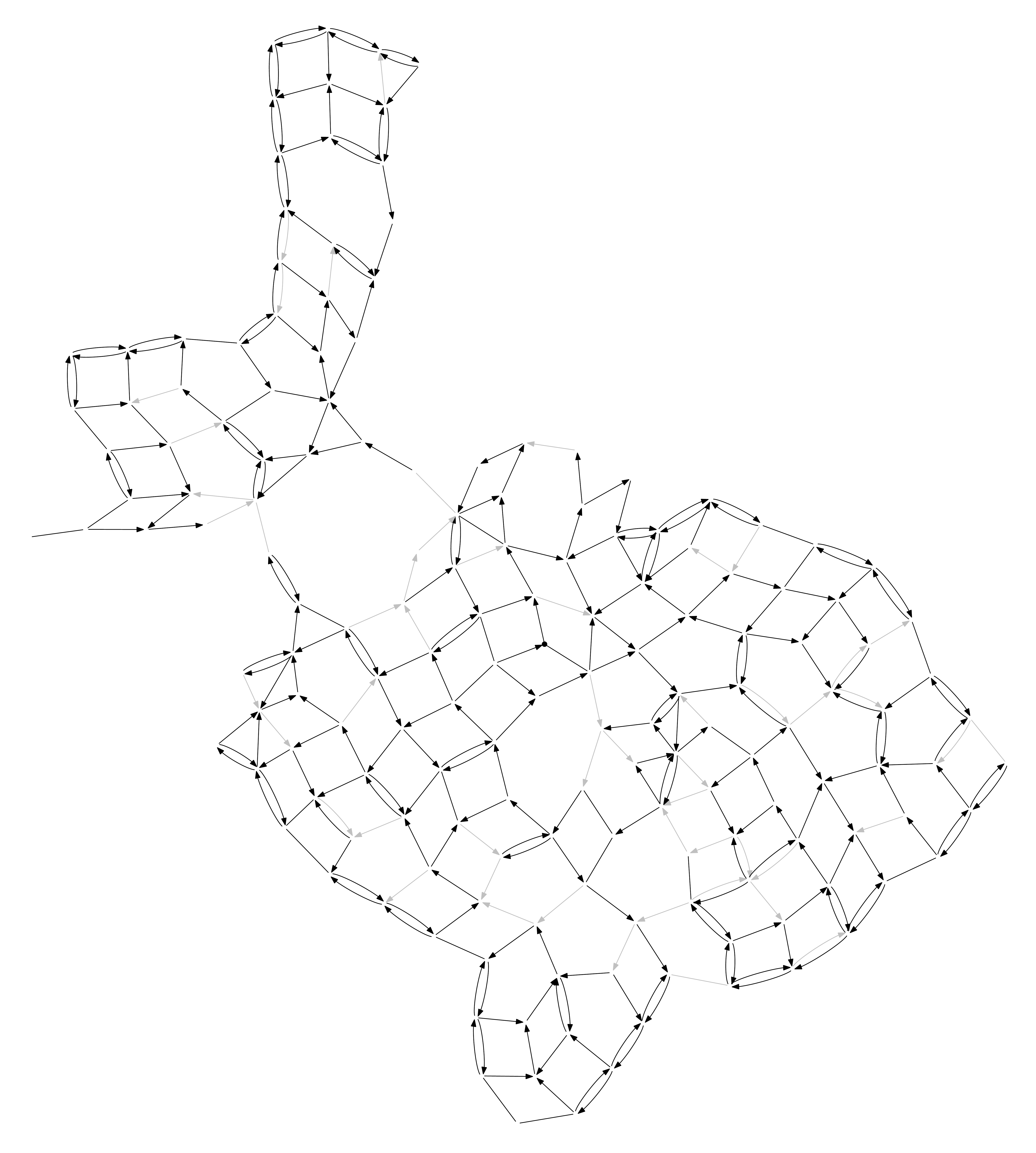}
    \caption{ob-lpr-b-03}
  \end{subfigure}
  \hfill
  \begin{subfigure}[b]{0.3\textwidth}
    \includegraphics[width=\textwidth]{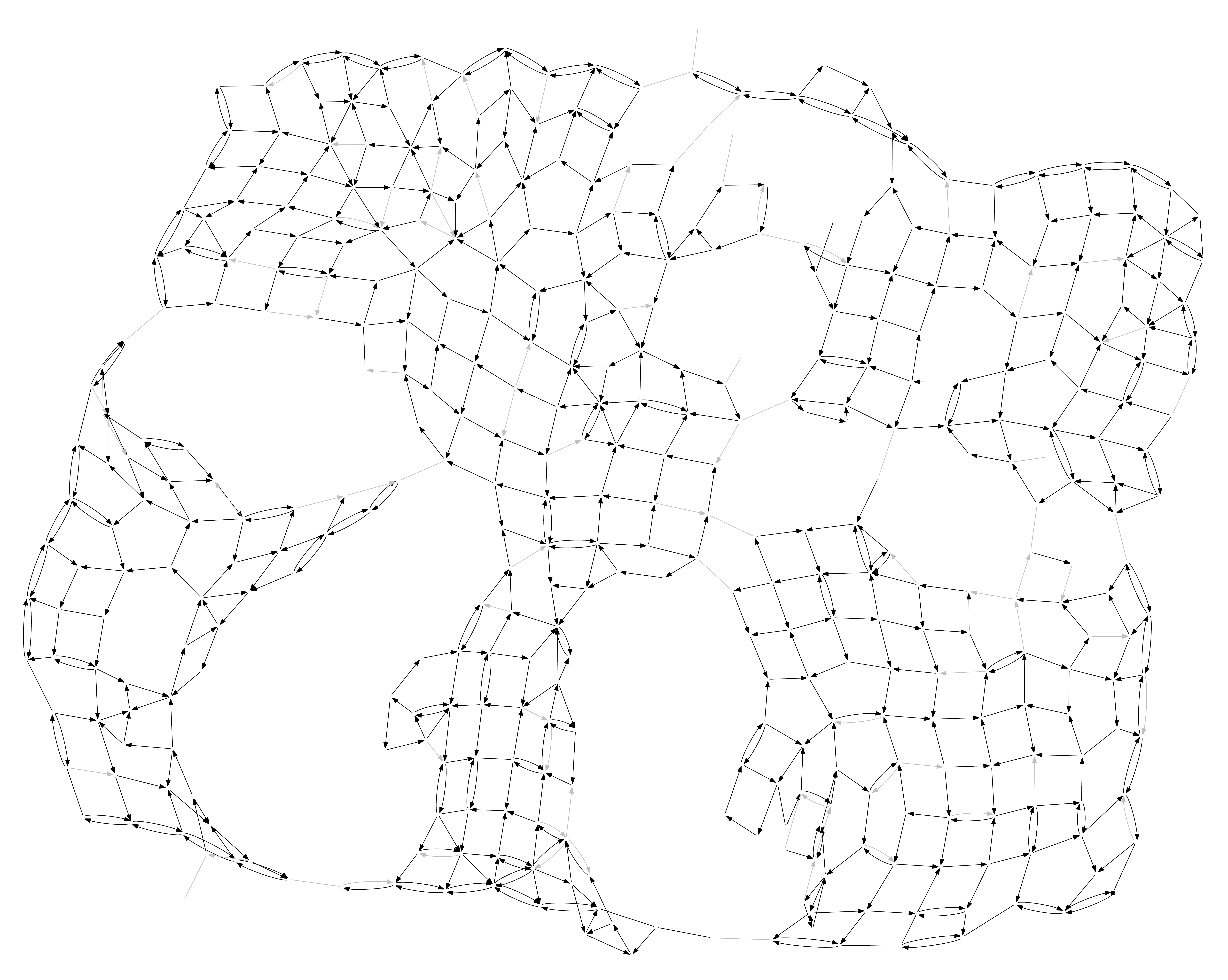}
    \caption{ob2-lpr-b-05}
  \end{subfigure}
  \caption{Three instances from the \texttt{Ob} benchmark set.}
  \label{fig:obinst}
\end{figure*}

\subsection{The Ob benchmark set}
\label{sec:ob}
Given our theoretical work in \cref{sec:rpp,sec:carp},
the solution quality achievable in polynomial time
appears to mainly depend on the number~\(C\) on connected components
in the graph induced by the positive-demand arcs.
However, we noticed that widely used benchmark instances for
variants of CARP have~\(C=1\).
In order to motivate a more representative evaluation
of the quality
of polynomial-time heuristics for variants of CARP,
we provide the \texttt{Ob} set of instances derived from the
\texttt{lpr} and \texttt{egl-large} instances with
\(C\)~from 2 to 5.  The approach can be easily used to create more
components.

The \texttt{Ob} instances%
\footnote{Available at \url{http://gitlab.com/rvb/mwcarp-ob}
and named after the river Ob, 
which bisects the city Novosibirsk.}
simulate cities that are divided by a river
that can be crossed via a few bridges without demand.
The underlying assumption is that, for example,
household waste does not have to be collected from bridges.
We generated the instances as follows.

As a base, we took sufficiently large instances from the \texttt{lpr} and \texttt{egl-large} sets
(it made little sense to split
the small \texttt{mval} or \texttt{lpr} instances
into several components).
In each instance,
we chose one or two random edges or arcs as ``bridges''.
Let \(B\)~be the set of their end points.
We then grouped all vertices of the graph into clusters:
For each \(v\in B\), there is one cluster containing all vertices
that are closer to~\(v\) than to all other vertices of~\(B\).
Finally,
we deleted all but a few edges between the clusters,
so that usually two or three edges
remain between each pair of clusters.
The demand of the edges remaining between clusters is set to zero,
they are our ``bridges'' between the river banks.
The intuition is that, if one of our initially chosen edges or arcs~\((u,v)\) was a bridge
across a relatively straight river,
then indeed every point on \(u\)'s side of the river would be closer
to~\(u\) than to~\(v\).
We discarded and regenerated instances
that were not strongly connected or had
river sides of highly imbalanced size
(three times below the average component size).
\cref{fig:obinst} shows three of the resulting instances.

Note that this approach can yield instances
where \(C\)~exceeds the number of clusters
since deleting edges between the clusters may create
more connected components in the graph
induced by the positive-demand arcs.
The approach straightforwardly applies to generating instances
with even larger~\(C\):
One simply chooses more initial ``bridges''.

As a starting point,
\cref{tab:obres} shows the number~\(C\),
a lower bound (LB) computed using an ILP relaxation of \citet{GMP10},
and the best upper bound obtained using our approximation algorithm
for each of the \texttt{Ob} instances
using any of the service direction heuristics in \cref{sec:serdir}.
The ``ob-'' instances were generated by choosing one initial bridge,
the ``ob2-'' instances were generated by choosing two initial bridges.

\begin{table*}[tb]
  \centering
  \caption{First upper and lower bounds for the \texttt{Ob} instances described in \cref{sec:ob}.}
  \label{tab:obres}
  \begin{tabular}{l|rrr||l|rrr}
\toprule
    Instance & $C$ & LB          & UB          & Instance   & $C$ & LB  & UB  \\
\midrule
ob-egl-g1-A & 2 & 817\,223    & 1\,152\,093 & ob2-egl-g1-A & 4 & 736\,899   & 1\,073\,386\\
ob-egl-g1-B & 2 & 1\,180\,105 & 1\,627\,305 & ob2-egl-g1-B & 5 & 840\,773   & 1\,221\,424\\
ob-egl-g1-C & 2 & 1\,018\,890 & 1\,405\,024 & ob2-egl-g1-C & 5 & 992\,974   & 1\,405\,836\\
ob-egl-g1-D & 3 & 1\,354\,671 & 1\,810\,306 & ob2-egl-g1-D & 4 & 1\,056\,593& 1\,491\,387\\
ob-egl-g1-E & 3 & 1\,486\,033 & 1\,955\,945 & ob2-egl-g1-E & 4 & 1\,175\,241& 1\,609\,377\\
ob-egl-g2-A & 2 & 922\,853    & 1\,286\,986 & ob2-egl-g2-A & 4 & 854\,823   & 1\,202\,379\\
ob-egl-g2-B & 2 & 1\,015\,013 & 1\,388\,809 & ob2-egl-g2-B & 4 & 906\,415   & 1\,259\,017\\
ob-egl-g2-C & 2 & 1\,308\,463 & 1\,701\,004 & ob2-egl-g2-C & 4 & 1\,154\,372& 1\,574\,762\\
ob-egl-g2-D & 2 & 1\,315\,717 & 1\,720\,548 & ob2-egl-g2-D & 4 & 1\,361\,397& 1\,782\,335\\
ob-egl-g2-E & 2 & 1\,677\,109 & 2\,139\,982 & ob2-egl-g2-E & 4 & 1\,295\,704& 1\,747\,883\\
\midrule
ob-lpr-a-03 & 3 & 71\,179     & 73\,055     & ob2-lpr-a-03 & 5 & 67\,219    & 69\,307  \\
ob-lpr-a-04 & 2 & 119\,759    & 123\,838    & ob2-lpr-a-04 & 4 & 115\,110   & 119\,550 \\
ob-lpr-a-05 & 2 & 195\,518    & 203\,832    & ob2-lpr-a-05 & 5 & 189\,968   & 197\,748 \\
ob-lpr-b-03 & 2 & 73\,670     & 75\,052     & ob2-lpr-b-03 & 5 & 67\,924    & 69\,518  \\
ob-lpr-b-04 & 2 & 122\,079    & 127\,020    & ob2-lpr-b-04 & 4 & 112\,104   & 116\,696 \\
ob-lpr-b-05 & 2 & 204\,389    & 213\,593    & ob2-lpr-b-05 & 5 & 191\,138   & 197\,878 \\
ob-lpr-c-03 & 2 & 105\,897    & 109\,913    & ob2-lpr-c-03 & 4 & 98\,244    & 102\,270 \\
ob-lpr-c-04 & 2 & 161\,856    & 167\,336    & ob2-lpr-c-04 & 4 & 155\,894   & 161\,615 \\
ob-lpr-c-05 & 2 & 250\,636    & 258\,396    & ob2-lpr-c-05 & 4 & 238\,299   & 246\,368 \\
\bottomrule
  \end{tabular}
\end{table*}

\section{Conclusion}
\label{sec:conc}
Since our algorithm outperforms many other polynomial-time heuristics,
it is useful for computing good solutions in instances
that are still too large to be attacked
by exact, local search, or genetic algorithms. Moreover, it might be useful to use our solution as initial solution for local search algorithms. 

Our theoretical results show that
one should not evaluate polynomial-time heuristics
only on instances whose positive-demand arcs induce
a graph with only one connected component,
because the solution quality achievable
in polynomial time is largely determined
by this number of connected components.
Therefore, it would be interesting to see
how other polynomial-time heuristics,
which do not take into account
the number of connected components in the graph
induced by the positive-demand arcs,
compare to our algorithm
in instances where this number is larger than one.

Finally, we conclude with a theoretical question:
It is easy to show a 3-approximation
for the Mixed Chinese Postman problem
using the approach in \cref{sec:mwrpp},
yet \citet{RV99} showed a $3/2$-\apxn{}.
Can our
$(\alpha(\comps)+3)$-\apxn{} for \MWRPP{} in
\cref{ourthm}\eqref{ourthm2} be improved to an
$(\alpha(\comps)+3/2)$-\apxn{} analogously?

\paragraph{Acknowledgments.} 
This research was initiated during a research retreat
of the algorithms and complexity theory group of TU Berlin,
held in Rothenburg/Oberlausitz, Germany, in March~2015.
We thank Sepp Hartung, Iyad Kanj, and André Nichterlein
for fruitful discussions.

{ 
  \setlength{\bibsep}{0pt}
  \bibliographystyle{abbrvnat}
  \bibliography{dar-arxiv}

\newcommand{\noopsort}[1]{}
\begin{thebibliography}{41}
\providecommand{\natexlab}[1]{#1}
\providecommand{\url}[1]{\texttt{#1}}
\expandafter\ifx\csname urlstyle\endcsname\relax
  \providecommand{\doi}[1]{doi: #1}\else
  \providecommand{\doi}{doi: \begingroup \urlstyle{rm}\Url}\fi

\bibitem[Ahuja et~al.(1993)Ahuja, Magnanti, and Orlin]{AMO93}
R.~K. Ahuja, T.~L. Magnanti, and J.~B. Orlin.
\newblock \emph{Network Flows---Theory, Algorithms and Applications}.
\newblock Prentice Hall, 1993.

\bibitem[Asadpour et~al.(2010)Asadpour, Goemans, M\k{a}dry, Gharan, and
  Saberi]{AGMOS10}
A.~Asadpour, M.~X. Goemans, A.~M\k{a}dry, S.~O. Gharan, and A.~Saberi.
\newblock An {$O(\log n/\allowbreak\log\log n)$}-ap\-prox\-i\-ma\-tion
  algorithm for the asymmetric traveling salesman problem.
\newblock In \emph{Proceedings of the 21st Annual ACM-SIAM Symposium on
  Discrete Algorithms (SODA'10)}, pages 379--389. Society for Industrial and
  Applied Mathematics, 2010.
\newblock \doi{10.1137/1.9781611973075.32}.

\bibitem[Beasley(1983)]{Bea83}
J.~Beasley.
\newblock Route first-{Cluster} second methods for vehicle routing.
\newblock \emph{Omega}, 11\penalty0 (4):\penalty0 403--408, 1983.
\newblock \doi{10.1016/0305-0483(83)90033-6}.

\bibitem[Belenguer et~al.(2006)Belenguer, Benavent, Lacomme, and Prins]{BBLP06}
J.-M. Belenguer, E.~Benavent, P.~Lacomme, and C.~Prins.
\newblock Lower and upper bounds for the mixed capacitated arc routing problem.
\newblock \emph{Computers {\&} Operations Research}, 33\penalty0 (12):\penalty0
  3363--3383, 2006.
\newblock \doi{10.1016/j.cor.2005.02.009}.

\bibitem[Bellman(1962)]{Bell62}
R.~Bellman.
\newblock Dynamic programming treatment of the {Travelling Salesman Problem}.
\newblock \emph{Journal of the ACM}, 9\penalty0 (1):\penalty0 61--63, 1962.
\newblock \doi{10.1145/321105.321111}.

\bibitem[{\noopsort{Bevern}van Bevern}
  et~al.(2014{\natexlab{a}}){\noopsort{Bevern}van Bevern}, Hartung,
  Nichterlein, and Sorge]{BHNS14}
R.~{\noopsort{Bevern}van Bevern}, S.~Hartung, A.~Nichterlein, and M.~Sorge.
\newblock Constant-factor approximations for capacitated arc routing without
  triangle inequality.
\newblock \emph{Operations Research Letters}, 42\penalty0 (4):\penalty0
  290--292, 2014{\natexlab{a}}.
\newblock \doi{10.1016/j.orl.2014.05.002}.

\bibitem[{\noopsort{Bevern}van Bevern}
  et~al.(2014{\natexlab{b}}){\noopsort{Bevern}van Bevern}, Niedermeier, Sorge,
  and Weller]{BNSW14}
R.~{\noopsort{Bevern}van Bevern}, R.~Niedermeier, M.~Sorge, and M.~Weller.
\newblock Complexity of arc routing problems.
\newblock In \emph{Arc Routing: Problems, Methods, and Applications}, MOS-SIAM
  Series on Optimization. SIAM, 2014{\natexlab{b}}.
\newblock \doi{10.1137/1.9781611973679.ch2}.

\bibitem[{\noopsort{Bevern}van Bevern} et~al.(2015){\noopsort{Bevern}van
  Bevern}, Komusiewicz, and Sorge]{BKS15}
R.~{\noopsort{Bevern}van Bevern}, C.~Komusiewicz, and M.~Sorge.
\newblock Approximation algorithms for mixed, windy, and capacitated arc
  routing problems.
\newblock In \emph{Proceedings of the 15th Workshop on Algorithmic Approaches
  for Transportation Modeling, Optimization, and Systems (ATMOS'15)}, volume~48
  of \emph{OpenAccess Series in Informatics (OASIcs)}, pages 130--143. Schloss
  Dagstuhl--Leibniz-Zentrum f{\"u}r Informatik, 2015.
\newblock \doi{10.4230/OASIcs.ATMOS.2015.130}.

\bibitem[Bode and Irnich(2015)]{BI15}
C.~Bode and S.~Irnich.
\newblock In-depth analysis of pricing problem relaxations for the capacitated
  arc-routing problem.
\newblock \emph{Transportation Science}, 49\penalty0 (2):\penalty0 369--383,
  2015.
\newblock \doi{10.1287/trsc.2013.0507}.

\bibitem[Brandão and Eglese(2008)]{BE08}
J.~Brandão and R.~Eglese.
\newblock A deterministic tabu search algorithm for the capacitated arc routing
  problem.
\newblock \emph{Computers {\&} Operations Research}, 35\penalty0 (4):\penalty0
  1112--1126, 2008.
\newblock \doi{10.1016/j.cor.2006.07.007}.

\bibitem[Christofides et~al.(1986)Christofides, Campos, Corber{\'a}n, and
  Mota]{CCCM86}
N.~Christofides, V.~Campos, {\'A}.~Corber{\'a}n, and E.~Mota.
\newblock An algorithm for the {Rural Postman} problem on a directed graph.
\newblock In \emph{Netflow at Pisa}, volume~26 of \emph{Mathematical
  Programming Studies}, pages 155--166. Springer, 1986.
\newblock \doi{10.1007/BFb0121091}.

\bibitem[Corberán and Laporte(2014)]{CL14}
{\' A}.~Corberán and G.~Laporte, editors.
\newblock \emph{Arc Routing: Problems, Methods, and Applications}.
\newblock SIAM, 2014.

\bibitem[Corberán et~al.(2000)Corberán, Martí, and Romero]{CMR00}
A.~Corberán, R.~Martí, and A.~Romero.
\newblock Heuristics for the mixed rural postman problem.
\newblock \emph{Computers {\&} Operations Research}, 27\penalty0 (2):\penalty0
  183--203, 2000.
\newblock \doi{10.1016/S0305-0548(99)00031-3}.

\bibitem[Cygan et~al.(2015)Cygan, Fomin, Kowalik, Lokshtanov, Marx, Pilipczuk,
  Pilipczuk, and Saurabh]{CFK+15}
M.~Cygan, F.~V. Fomin, L.~Kowalik, D.~Lokshtanov, D.~Marx, M.~Pilipczuk,
  M.~Pilipczuk, and S.~Saurabh.
\newblock \emph{Parameterized Algorithms}.
\newblock Springer, 2015.
\newblock \doi{10.1007/978-3-319-21275-3}.

\bibitem[Ding et~al.(2014)Ding, Li, and Lih]{DLL14}
H.~Ding, J.~Li, and K.-W. Lih.
\newblock Approximation algorithms for solving the constrained arc routing
  problem in mixed graphs.
\newblock \emph{European Journal of Operational Research}, 239\penalty0
  (1):\penalty0 80--88, 2014.
\newblock \doi{10.1016/j.ejor.2014.04.039}.

\bibitem[Dorn et~al.(2013)Dorn, Moser, Niedermeier, and Weller]{DMNW13}
F.~Dorn, H.~Moser, R.~Niedermeier, and M.~Weller.
\newblock Efficient algorithms for {Eulerian Extension} and {Rural Postman}.
\newblock \emph{SIAM Journal on Discrete Mathematics}, 27\penalty0
  (1):\penalty0 75--94, 2013.
\newblock \doi{10.1137/110834810}.

\bibitem[Downey and Fellows(2013)]{DF13}
R.~G. Downey and M.~R. Fellows.
\newblock \emph{Fundamentals of Parameterized Complexity}.
\newblock Springer, 2013.
\newblock \doi{10.1007/978-1-4471-5559-1}.

\bibitem[Edmonds(1975)]{Edm65}
J.~Edmonds.
\newblock The {C}hinese postman problem.
\newblock \emph{Operations Research}, pages B73--B77, 1975.
\newblock Supplement\,1.

\bibitem[Edmonds and Johnson(1973)]{EJ73}
J.~Edmonds and E.~L. Johnson.
\newblock Matching, {E}uler tours and the {C}hinese postman.
\newblock \emph{Mathematical Programming}, 5:\penalty0 88--124, 1973.
\newblock \doi{10.1007/BF01580113}.

\bibitem[Floyd(1962)]{Flo62}
R.~W. Floyd.
\newblock Algorithm 97: Shortest path.
\newblock \emph{Communications of the ACM}, 5\penalty0 (6):\penalty0 345, 1962.
\newblock \doi{10.1145/367766.368168}.

\bibitem[Frederickson(1977)]{Fre77}
G.~N. Frederickson.
\newblock \emph{Approximation {A}lgorithms for {NP}-hard {R}outing {P}roblems}.
\newblock PhD thesis, Faculty of the Graduate School of the University of
  Maryland, 1977.

\bibitem[Frederickson(1979)]{Fre79}
G.~N. Frederickson.
\newblock Approximation algorithms for some postman problems.
\newblock \emph{Journal of the ACM}, 26\penalty0 (3):\penalty0 538--554, 1979.
\newblock \doi{10.1145/322139.322150}.

\bibitem[Frederickson et~al.(1978)Frederickson, Hecht, and Kim]{FHK78}
G.~N. Frederickson, M.~S. Hecht, and C.~E. Kim.
\newblock Approximation algorithms for some routing problems.
\newblock \emph{SIAM Journal on Computing}, 7\penalty0 (2):\penalty0 178--193,
  1978.
\newblock \doi{10.1137/0207017}.

\bibitem[Frieze et~al.(1982)Frieze, Galbiati, and Maffioli]{FGM82}
A.~M. Frieze, G.~Galbiati, and F.~Maffioli.
\newblock On the worst-case performance of some algorithms for the asymmetric
  traveling salesman problem.
\newblock \emph{Networks}, 12\penalty0 (1):\penalty0 23--39, 1982.
\newblock \doi{10.1002/net.3230120103}.

\bibitem[Golden and Wong(1981)]{GW81}
B.~L. Golden and R.~T. Wong.
\newblock Capacitated arc routing problems.
\newblock \emph{Networks}, 11\penalty0 (3):\penalty0 305--315, 1981.
\newblock \doi{10.1002/net.3230110308}.

\bibitem[Golden et~al.(1983)Golden, Dearmon, and Baker]{GDB83}
B.~L. Golden, J.~S. Dearmon, and E.~K. Baker.
\newblock Computational experiments with algorithms for a class of routing
  problems.
\newblock \emph{Computers {\&} Operations Research}, 10\penalty0 (1):\penalty0
  47--59, 1983.
\newblock \doi{10.1016/0305-0548(83)90026-6}.

\bibitem[Gouveia et~al.(2010)Gouveia, Mourão, and Pinto]{GMP10}
L.~Gouveia, M.~C. Mourão, and L.~S. Pinto.
\newblock Lower bounds for the mixed capacitated arc routing problem.
\newblock \emph{Computers \& Operations Research}, 37\penalty0 (4):\penalty0
  692--699, 2010.
\newblock \doi{10.1016/j.cor.2009.06.018}.

\bibitem[Gutin et~al.(2016)Gutin, Wahlstr{\"o}m, and Yeo]{GWY16}
G.~Gutin, M.~Wahlstr{\"o}m, and A.~Yeo.
\newblock {Rural Postman} parameterized by the number of components of required
  edges.
\newblock \emph{Journal of Computer and System Sciences}, 2016.
\newblock \doi{10.1016/j.jcss.2016.06.001}.
\newblock In press.

\bibitem[Hall(1935)]{Hal35}
P.~Hall.
\newblock On representatives of subsets.
\newblock \emph{Journal of the London Mathematical Society}, 10:\penalty0
  26--30, 1935.
\newblock \doi{10.1112/jlms/s1-10.37.26}.

\bibitem[Held and Karp(1962)]{HelK62}
M.~Held and R.~M. Karp.
\newblock A dynamic programming approach to sequencing problems.
\newblock \emph{Journal of the Society for Industrial and Applied Mathematics},
  10\penalty0 (1):\penalty0 196--210, 1962.
\newblock \doi{10.1137/0110015}.

\bibitem[Jansen(1993)]{Jan93}
K.~Jansen.
\newblock Bounds for the general capacitated routing problem.
\newblock \emph{Networks}, 23\penalty0 (3):\penalty0 165--173, 1993.
\newblock \doi{10.1002/net.3230230304}.

\bibitem[Lenstra and {Rinnooy Kan}(1976)]{LR76}
J.~K. Lenstra and A.~H.~G. {Rinnooy Kan}.
\newblock On general routing problems.
\newblock \emph{Networks}, 6\penalty0 (3):\penalty0 273--280, 1976.
\newblock \doi{10.1002/net.3230060305}.

\bibitem[Marx(2008)]{Mar08}
D.~Marx.
\newblock Parameterized complexity and approximation algorithms.
\newblock \emph{The Computer Journal}, 51\penalty0 (1):\penalty0 60--78, 2008.
\newblock \doi{10.1093/comjnl/bxm048}.

\bibitem[Mourão and Amado(2005)]{MA05}
M.~C. Mourão and L.~Amado.
\newblock Heuristic method for a mixed capacitated arc routing problem: A
  refuse collection application.
\newblock \emph{European Journal of Operational Research}, 160\penalty0
  (1):\penalty0 139--153, 2005.
\newblock \doi{10.1016/j.ejor.2004.01.023}.

\bibitem[Orloff(1976)]{Orl76}
C.~S. Orloff.
\newblock On general routing problems: Comments.
\newblock \emph{Networks}, 6\penalty0 (3):\penalty0 281--284, 1976.
\newblock \doi{10.1002/net.3230060306}.

\bibitem[Pearn and Wu(1995)]{PW95}
W.~L. Pearn and T.~C. Wu.
\newblock Algorithms for the rural postman problem.
\newblock \emph{Computers {\&} Operations Research}, 22\penalty0 (8):\penalty0
  819--828, 1995.
\newblock \doi{10.1016/0305-0548(94)00070-O}.

\bibitem[Raghavachari and Veerasamy(1999)]{RV99}
B.~Raghavachari and J.~Veerasamy.
\newblock A 3/2-approximation algorithm for the {Mixed Postman Problem}.
\newblock \emph{SIAM Journal on Discrete Mathematics}, 12\penalty0
  (4):\penalty0 425--433, 1999.
\newblock \doi{10.1137/S0895480197331454}.

\bibitem[Sorge et~al.(2011)Sorge, van Bevern, Niedermeier, and Weller]{SBNW11}
M.~Sorge, R.~van Bevern, R.~Niedermeier, and M.~Weller.
\newblock From few components to an {Eulerian} graph by adding arcs.
\newblock In \emph{Proceedings of the 37th International Workshop on
  Graph-Theoretic Concepts in Computer Science (WG'11)}, pages 307--318.
  Springer, 2011.
\newblock \doi{10.1007/978-3-642-25870-1_28}.

\bibitem[Sorge et~al.(2012)Sorge, van Bevern, Niedermeier, and Weller]{SBNW12}
M.~Sorge, R.~van Bevern, R.~Niedermeier, and M.~Weller.
\newblock A new view on {Rural Postman} based on {Eulerian Extension} and
  {Matching}.
\newblock \emph{Journal of Discrete Algorithms}, 16:\penalty0 12--33, 2012.
\newblock \doi{10.1016/j.jda.2012.04.007}.

\bibitem[Ulusoy(1985)]{Ulu85}
G.~Ulusoy.
\newblock The fleet size and mix problem for capacitated arc routing.
\newblock \emph{European Journal of Operational Research}, 22\penalty0
  (3):\penalty0 329--337, 1985.
\newblock \doi{10.1016/0377-2217(85)90252-8}.

\bibitem[Wøhlk(2008)]{Woe08}
S.~Wøhlk.
\newblock An approximation algorithm for the {Capacitated Arc Routing Problem}.
\newblock \emph{The Open Operational Research Journal}, 2:\penalty0 8--12,
  2008.
\newblock \doi{10.2174/1874243200802010008}.

\end{thebibliography}
}
\end{document}